\begin{document}
\bibliographystyle{abbrv}

\title{Interference alignment-based sum capacity bounds for random dense Gaussian
interference networks}
\author{
Oliver Johnson\thanks{Department of Mathematics, University of Bristol, University Walk, Bristol, BS8 1TW, UK.
Email: \texttt{O.Johnson@bristol.ac.uk}}
\and Matthew Aldridge\thanks{Department of Mathematics, University of Bristol, University Walk, Bristol, BS8 1TW, UK.
Email: \texttt{m.aldridge@bristol.ac.uk}}
\and
Robert Piechocki\thanks{Centre for Communications Research,
University of Bristol, Merchant Venturers Building,
Woodland Road, Bristol
BS8 1UB, UK. Email: \texttt{r.j.piechocki@bristol.ac.uk}}
}
\date{\today}
\maketitle

\renewcommand\labelenumi{(\roman{enumi})}
\renewcommand\theenumi{(\roman{enumi})}
\newtheorem{theorem}{Theorem}[section]
\newtheorem{lemma}[theorem]{Lemma}
\newtheorem{proposition}[theorem]{Proposition}
\newtheorem{corollary}[theorem]{Corollary}
\newtheorem{conjecture}[theorem]{Conjecture}
\newtheorem{definition}[theorem]{Definition}
\newtheorem{example}[theorem]{Example}
\newtheorem{condition}{Condition}
\newtheorem{main}{Theorem}
\newtheorem{remark}[theorem]{Remark}
\hfuzz25pt

\def \outlineby #1#2#3{\vbox{\hrule\hbox{\vrule\kern #1%
\vbox{\kern #2 #3\kern #2}\kern #1\vrule}\hrule}}%
\def \endbox {\outlineby{4pt}{4pt}{}}%
\newenvironment{proof}
{\noindent{\bf Proof\ }}{{\hfill \endbox
}\par\vskip2\parsep}
\newenvironment{pfof}[2]{\removelastskip\vspace{6pt}\noindent
 {\it Proof  #1.}~\rm#2}{\par\vspace{6pt}}

\newcommand{\vc}[1]{{\mathbf{ #1}}}
\newcommand{\var}{{\rm{Var\;}}}
\newcommand{\cov}{{\rm{Cov\;}}}
\newcommand{\tends}{\rightarrow \infty}
\newcommand{\C}{{\cal C}}
\newcommand{\ep}{{\mathbb {E}}}
\newcommand{\pr}{{\mathbb {P}}}
\newcommand{\re}{{\mathbb {R}}}
\newcommand{\I}{\mathbb {I}}
\newcommand{\Z}{{\mathbb {Z}}}
\newcommand{\Nat}{{\mathbb {N}}}

\newcommand{\blah}[1]{}

\newcommand{\AAA}{{\mathcal{A}}}
\newcommand{\BBB}{{\mathcal{B}}}
\newcommand{\D}{{\mathcal{D}}}
\newcommand{\SSC}{{\mathcal{S}}}
\newcommand{\SSspine}{{\mathcal{S}}_{\rm spine}}
\newcommand{\SSedge}{{\mathcal{S}}_{\rm edge}}
\newcommand{\SSbody}{{\mathcal{S}}_{\rm body}}
\newcommand{\TTC}{\overline{\mathcal{S}}}
\newcommand{\RR}{{\mathcal{R}}}
\newcommand{\TT}{{\mathcal{T}}}
\newcommand{\SNR}{{\rm{SNR}}}
\newcommand{\INR}{{\rm{INR}}}
\newcommand{\R}[1]{R[#1]}
\newcommand{\RC}[1]{R_C[#1]}
\newcommand{\RS}[1]{R^*[#1]}
\newcommand{\bino}{{\rm{Bin}}}
\newcommand{\csum}{C_{\Sigma}}
\newcommand{\euc}{d}
\newcommand{\half}{\frac{1}{2}}
\newcommand{\ol}[1]{\overline{#1}}
\newcommand{\csep}{C_{\rm{sep}}}
\newcommand{\dsep}{D_{\rm{sep}}}
\newcommand{\cdec}{C_{\rm{dec}}}
\newcommand{\Fsnr}{F_{S}}
\newcommand{\Fdis}{F_{\euc}}
\newcommand{\fdis}{f_{\euc}}
\newcommand{\idc}[2]{\vc{#1},\vc{#2}}
\newcommand{\B}[2]{B_{\idc{#1}{#2}}}
\newcommand{\pB}[2]{p_{\idc{#1}{#2}}}
\newcommand{\prj}[1]{\Pi_{\vc{#1}}}

\begin{abstract}
\noindent
 We consider a dense 
$K$ user Gaussian interference network formed by paired transmitters and
receivers placed independently at random in a fixed spatial 
region. Under natural conditions on the 
node position distributions and signal attenuation, we
prove convergence in probability of the average per-user capacity $\csum/K$ to
$\half \ep \log(1 + 2 \SNR)$.
The achievability result follows directly from results based on an interference
alignment scheme presented in recent work of Nazer et al. Our main contribution comes through an upper bound, motivated by ideas of `bottleneck 
capacity' developed in recent work of Jafar. By controlling the physical
location of transmitter--receiver pairs, we can match a large proportion of 
these pairs to form so-called $\epsilon$-bottleneck links, with consequent 
control of the sum capacity.
\end{abstract}
\section{Introduction and main result}
\subsection{Interference networks and bottleneck states}
Recent work of Jafar \cite{jafar} made significant progress towards
what is referred to as `the holy grail of network information theory', namely
the calculation of the capacity of arbitrary Gaussian interference networks.
Jafar proves convergence in probability of the averaged sum capacity
of certain dense Gaussian interference networks. Although results contained in the
paper \cite{jafar} made significant progress with this problem, 
the results were described under
the constraint that each direct link had the same fading coefficient $\sqrt{\SNR}$ --
a constraint that we relax in this paper. 

In Lemma 1 of \cite{jafar}, Jafar showed that a two-user Gaussian interference channel with one
of the cross-link strengths $\INR = \SNR$ has sum capacity exactly equal to
$\log(1+2\SNR)$. 
Jafar described such a configuration as an example of 
a `bottleneck state', in that altering the other cross-link
strength does not affect the capacity. Jafar went on to
define the
concept of an $\epsilon$-bottleneck link -- that is, a cross-link in a two-user
channel with capacity within
$\epsilon$ of $\log(1+ 2 \SNR)$.
He considered a model of  large networks,
where each value of $\SNR$ is fixed, and each $\INR$ is sampled independently from
a fixed distribution. Jafar argues that with probabilty $\delta$, each $\INR$ lies
in the range such that the corresponding two-user channel becomes an 
$\epsilon$-bottleneck. 
 He uses an argument based on Chebyshev's inequality to deduce 
that  
\begin{equation} \label{eq:jafar}
\pr \left( \left| \frac{\csum}{K} - \half \log(1+2 \SNR) \right| > \epsilon
\right) = O(K^{-2}).\end{equation} 
It is perhaps surprising that
the existence of a positive proportion
of $\epsilon$-bottleneck links implies accurate probabilistic bounds
on the sum capacity $\csum$ of the whole network. However, we might regard it as
analogous to the so-called `birthday paradox' -- although each cross-link has a probability
$\delta$ of being in an $\epsilon$-bottleneck, there are $K(K-1)$ cross-links that 
can have this property, so as $K$ tends to infinity, the number of cross-links with
this property becomes much larger than $K$.

In this paper, we show that results such as Equation (\ref{eq:jafar}) in fact hold more
generally, in cases where transmitter node positions $T_1, \ldots, T_K$ and
corresponding receiver node positions $R_1, \ldots, R_K$ are chosen independently 
at random in a region of space $\D$. 
While exact expressions for the capacity
of networks with arbitrarily placed nodes remain elusive, in Theorem
\ref{thm:main} we prove convergence
in probability for the average per-user capacity $\csum/K$ 
of such dense network configurations. This may perhaps
be analogous to the fact that  Shannon \cite{shannon} proved that the average
code performed well, while it remains a significantly harder problem to establish
good performance for a particular family of codes. In other words,  the randomness
we add to the model helps us, rather than making things harder.
The intuition is that in a dense network
of points, a large proportion of nodes can be put together pairwise to
form $\epsilon$-bottleneck links (see Figure \ref{fig:bottle}). 

\begin{figure}[ht]
\begin{center}
\begin{picture}(262,262)(19,19)
\put(20,20){\line(0,1){260}}
\put(280,20){\line(0,1){260}}
\put(20,20){\line(1,0){260}}
\put(20,280){\line(1,0){260}}
\put(30,30){\circle{3}}
\put(22,35){$\vc{T_5}$}
\put(30,30){\line(4,3){80}}
\put(110,90){\circle{3}}
\put(110,92){$\vc{R_5}$}
\put(110,192){\circle{3}}
\put(110,194){$\vc{T_1}$}
\put(110,90){\line(0,1){102}}
\put(170,112){\circle{3}}
\put(170,114){$\vc{R_1}$}
\put(110,194){\line(3,-4){60}}
\put(50,170){\circle{3}}
\put(50,172){$\vc{T_4}$}
\put(50,170){\line(1,0){200}}
\put(250,170){\circle{3}}
\put(250,172){$\vc{R_4}$}
\put(110,30){\circle{3}}
\put(93,32){$\vc{T_2}$}
\put(250,170){\line(-1,-1){140}}
\put(270,150){\circle{3}}
\put(260,155){$\vc{R_2}$}
\put(110,30){\line(4,3){160}}
\put(250,60){\circle{3}}
\put(250,62){$\vc{R_3}$}
\put(83,232){\circle{3}}
\put(83,234){$\vc{T_3}$}
\put(25,90){\circle{3}}
\put(25,92){$\vc{T_6}$}
\put(220,193){\circle{3}}
\put(220,195){$\vc{R_6}$}
\end{picture}
\caption{A dense network with $K=6$ transmitter--receiver pairs placed on the square $[0,1]^2$, with $\epsilon$-bottleneck links emphasised. The distances
from $T_5$ to $R_5$, $R_5$ to $T_1$ and $T_1$ to $R_1$ are all approximately equal,
similarly for $T_4$ to $R_4$, $R_4$ to $T_2$ and $T_2$ to $R_2$. Transmitter-receiver 
pairs $(T_3,R_3)$ and $(T_6,R_6)$ are not matched into
$\epsilon$-bottleneck links.\label{fig:bottle}}
\end{center}
\end{figure}
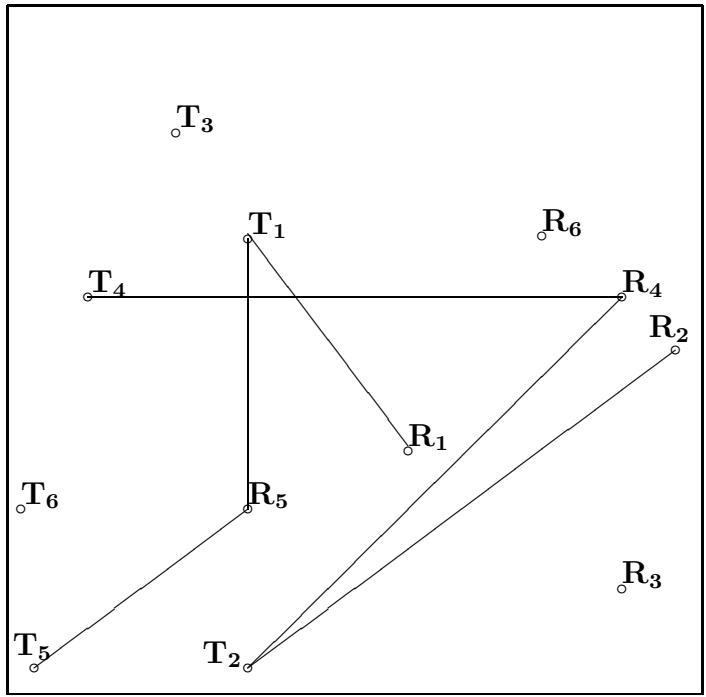
While it would be possible to adjust
individual transmitters' powers to force each $\SNR$ to become equal (as assumed by
\cite{jafar}), individual
power constraints make this undesirable. Further, Jafar assumed that the $\INR$
are independent and identicaly distributed, a property that would be lost if user
powers were scaled in this way.

\subsection{Interference Alignment }

The concept of interference alignment first appeared in 
\cite{cadambe}, and represents a departure from the recent
paradigm of random (or rather pseudo-random) code construction. 
Pseudo-random
codes (e.g. Turbo and LDPC codes) have revolutionised point-to-point communications.
However, since multi-terminal networks are interference-limited rather than
noise-limited, unstructured (pseudo-random) codes are not suitable.

Interference alignment advocates a collaborative solution. Each receiver
divides its signalling space (space/time/frequency/scale resources)
into two  parts; one for the signal from the intended
transmitter, and the second acts as a waste bin. The encoding is structured
is such a way that the transmitted signal from each of the $K$ transmitters
is seen in the clear space for the intended receiver, and at the same
time, it falls into the waste bin for all other receivers. In such a
scenario, the network is no longer interference limited. In the first
such scheme Cadambe and Jafar \cite{cadambe}
showed that spatio/temporal beamforming
in the $\SNR\rightarrow\infty$ regime allows to achieve for each user
pair a {}``capacity'' equal to half that 
of the single user channel. A different
idea recently appeared in \cite{nazer},
which does not require $\SNR\rightarrow\infty$. Consider
the following channel pair:
\begin{equation}
\mathbf{H_{a}}=\left[\begin{array}{ccc}
1 & -1 & 1\\
1 & 1 & -1\\
-1 & 1 & 1\end{array}\right],\;\;\; 
\mathbf{H_{b}}=\left[\begin{array}{ccc}
1 & 1 & -1\\
-1 & 1 & 1\\
1 & -1 & 1\end{array}\right]\label{eq:martix_pair}\end{equation}
The channels are constructed in such a way that the simple sum of the
received signals leads to interference alignment since $\mathbf{H_{a}}+\mathbf{H_{b}=\mathrm{2}I}$.
In more general channels, such as those with Rayleigh fading, 
one needs to code
over sufficiently long time intervals to observe and match complementary
matrix pairs. 
\subsection{Node positioning model}
We believe that our techniques should work in a variety of models for the node positions. We outline
one very natural
scenario here.
\begin{definition} \label{def:nodeplace}
Consider a fixed spatial region $\D \in \re^D$, with two probability distributions $\pr_T$ 
and $\pr_R$ supported on $\D$. Given an integer $K$, we sample the 
$K$ transmitter node positions
$T_1, \ldots, T_K$ independently from the distribution $\pr_T$. Similarly, we sample
the $K$ receiver node $R_1, \ldots, R_K$ positions independently from distribution $\pr_R$. We
refer to such a model of node placement as an `IID network'.
\end{definition}
Equivalently, we could state that transmitter and receiver positions are distributed according to two independent
(non-homogeneous) Poisson processes, conditioned such that there are $K$  points 
of each type in $\D$.
 We pair the transmitter and receiver nodes up so that $T_i$ wishes to communicate 
with $R_i$ for each $i$.
We make the following definition:
\begin{definition} \label{def:spatsep}
We say that transmitter and receiver distributions $\pr_T$ and $\pr_R$
are `spatially separated' if
there exist  constants $\csep < \infty$ and $\dsep < \infty$
such that for $T \sim \pr_T$ and $R \sim \pr_R$ the
Euclidean distance $\euc(T,R)$ satisfies
\begin{equation} \label{eq:spatsep}
  \pr( \euc( T, R) \leq s) \leq \csep s^{\dsep} \mbox{ \;\;\; for all $s$}.\end{equation}
\end{definition}
We argue in Lemma \ref{lem:spatsep} below that a wide range of node distributions $\pr_T$ and
$\pr_R$ have this spatial separation property, which allows us to control the tails of
the distribution of $\SNR$, and hence the maximum value of $\SNR$ in Lemma \ref{lem:tailbd}
\subsection{Transmission models}
For simplicity, we first describe our results in the context of so-called `line of sight'
communication models, without multipath interference. That is,
we consider a model where signal strengths decay 
deterministically with Euclidean distance $\euc$ according to
some monotonically decreasing continuous
function $f(\euc)$. 
We make the following definition, which complements the definition of
spatial separation given in Definition \ref{def:spatsep}.
\begin{definition} \label{def:decay}
We say that the signal is `decaying at rate $\alpha$' if there exists $\cdec < \infty$
such that for all $\euc$
\begin{equation} \label{eq:decay}
 f(\euc)  \leq \cdec \euc^{-\alpha}.\end{equation}
\end{definition}
Standard physical considerations imply that all signals must be decaying 
at some rate $\alpha \geq D$, where $D$ is the dimension of the underlying space.
Tse and Viswanath \cite[Section 2.1]{tse} discuss a variety of models under which this 
condition holds for different exponents $\alpha$.

We define the full action of the Gaussian interference network:
\begin{definition}
\label{def:transprot}
Fix transmitter node positions $\{ T_1, \ldots, T_K \} 
\in \D$ and receiver node positions
$\{R_1, \ldots, R_K \} \in \D$, and consider Euclidean distance $\euc$ and attenuation function $f$.
For each $i$ and $j$, define 
$\INR_{ij} = f(\euc(T_i,R_j))$.
For emphasis, for each $i$ we write $\SNR_i$ for $\INR_{ii}$.

We consider the $K$ user Gaussian interference network defined so that
transmitter $i$ sends a message encoded as a sequence of complex
numbers $\vc{X}_i = (X_i[1], \ldots, X_i[N])$ to receiver $i$, under a power constraint
$\frac{1}{N} \sum_{n=1}^N |X_i[n]|^2
 \leq 1$ for each $i$. The $n$th symbol received at receiver $j$ is given as
\begin{equation} \label{eq:transmod} Y_j[n] = \sum_{i=1}^K \exp(i \phi_{ij}[n]) 
\sqrt{\INR_{ij}} X_i[n]  + Z_j[n],
\end{equation}
where $Z_j[n]$ are independent standard complex Gaussians, and $\phi_{ij}[n]$ are
independent $U[0,2\pi]$ random variables independent of all other terms.
The $\INR_{ji}$ remain fixed, since the node positions themselves are fixed,
but the phases are fast fading.
\end{definition}
We write $S_{ij}$ for the random variables $\half \log(1 + 2 \INR_{ij})$,
which  are functions of the distance
between $T_i$ and $R_j$.
 In particular, since the nodes are positioned
independently in Definition \ref{def:nodeplace}, under this model
the random variables $S_{ii} = \half \log(1 + 2 \SNR_i)$ are independent
and identically distributed. 

In Section \ref{sec:randfad}, we explain how our techniques can be extended to apply to more general models, in the presence of random fading terms.
\subsection{Main result: convergence in probability of $\csum/K$}
We now state the main theorem of this paper, which proves convergence in
probability of the averaged capacity, under the model of node placement
described in Definition \ref{def:nodeplace}  and 
the model for signal attenuation described in Definition
\ref{def:transprot}. For the sake of clarity, we restrict our attention to the
case where $\pr_R$ and $\pr_T$ are uniform, though we
discuss later to what extent this assumption is necessary. 

We restrict to bounded
regions $\D$ with a smooth boundary -- the sense of this smoothness will be made
precise in the proof of Theorem \ref{thm:main}. Theorem \ref{thm:main} will certainly
hold for squares and balls $\D = [0,1]^D$ and $\D = \{ \vc{x}: \euc(\vc{x}, \vc{0})
\leq 1 \}$, and indeed for any convex and bounded polytopes (with finite surface area). 
Essentially we require that the boundary of $\D \times \D$
has Hausdorff dimension $\leq 2D-1$, which is very natural.
\begin{theorem} \label{thm:main}
Consider a Gaussian interference network formed by $K$ pairs of 
nodes placed in an IID network, with the signal decaying at some rate $\alpha \geq D$.
If distributions $\pr_T$ and $\pr_R$ are both uniform on a bounded region $\D$ with
smooth boundary, then
the average per-user capacity $\csum/K$ converges in probability to 
$\half \ep \log(1 + 2 \SNR)$, that is 
$$ \lim_{K \tends} \pr \left( \left| \frac{\csum}{K} - 
\half \ep \log(1 + 2 \SNR)
 \right| > \epsilon \right)
= 0 \mbox{\;\;\;\; for all $\epsilon > 0$}.$$
\end{theorem}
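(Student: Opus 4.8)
The plan is to prove matching lower and upper bounds on $\csum/K$, each showing that $\csum/K$ lies within $O(\epsilon)$ of $\ep[S_{11}] = \half \ep \log(1+2\SNR)$ with probability tending to $1$, and then to let $\epsilon \to 0$. The anchor throughout is the weak law of large numbers: since the pairs $(T_i,R_i)$ are independent and identically distributed by Definition \ref{def:nodeplace}, and $S_{11} = \half \log(1+2\SNR_1)$ has finite mean (its tail is controlled through Definitions \ref{def:spatsep} and \ref{def:decay}, as exploited in Lemma \ref{lem:tailbd}), the empirical average $\frac{1}{K}\sum_{i=1}^K S_{ii}$ converges in probability to $\ep[S_{11}]$. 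The whole argument then reduces to sandwiching $\csum$ between quantities that differ from $\sum_i S_{ii}$ by a lower-order term.

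For the lower bound I would apply the interference alignment scheme of Nazer et al. \cite{nazer} to the fast-fading model of Definition \ref{def:transprot}. Coding over long blocks lets each receiver align the interference into complementary matrix pairs as in \eqref{eq:martix_pair}, so that user $i$ achieves a rate within $\epsilon$ of $\half \log(1+2\SNR_i) = S_{ii}$ for all large $K$. Summing gives $\csum \geq \sum_i S_{ii} - \epsilon K$, hence $\csum/K \geq \ep[S_{11}] - 2\epsilon$ with high probability by the law of large numbers above.

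The upper bound is the heart of the argument: the aim is to partition almost all users into pairs $(i,j)$ forming $\epsilon$-bottleneck links, and to bound each pair's sum capacity by the bottleneck computation in Lemma 1 of \cite{jafar}. A genie-aided outer bound that supplies each group with the codewords of all other users decouples the network into independent sub-networks, so $\csum$ is at most the sum of the sub-network sum-capacities of the groups in any partition. For a pair $(i,j)$ that is an $\epsilon$-bottleneck, meaning $\euc(T_i,R_i)$, $\euc(T_i,R_j)$ and $\euc(T_j,R_j)$ are all nearly equal, Lemma 1 of \cite{jafar} together with the continuity of $\half\log(1+2\cdot)$ bounds the pair's sum capacity by $\log(1+2\SNR_j) + O(\epsilon) = S_{ii} + S_{jj} + O(\epsilon)$, while an unmatched user $i$ contributes at most its point-to-point capacity $\log(1+\SNR_i) \leq 2S_{ii}$. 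Writing $U$ for the unmatched set, this yields $\csum \leq \sum_i S_{ii} + \sum_{i\in U} S_{ii} + O(\epsilon K)$, so it remains to show $|U| = o(K)$ and $\sum_{i\in U} S_{ii} = o(K)$ with high probability. The latter follows from the former by truncating at a large threshold $M$: the contribution of users with $S_{ii} > M$ is controlled uniformly over $U$ by the concentration of $\sum_i S_{ii}\,\I(S_{ii}>M)$ near $K\,\ep[S_{11}\I(S_{11}>M)]$, which is small for large $M$, while the truncated part contributes at most $M|U| = o(K)$.

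The main obstacle is therefore the geometric claim that a $(1-o(1))$ proportion of users can be matched into $\epsilon$-bottleneck pairs. I would first discard the $o(K)$ users whose direct distance $\euc(T_i,R_i)$ is atypically small or large, controlled by Definition \ref{def:spatsep} and Lemma \ref{lem:tailbd}, and bin the remainder by the value of $\SNR_i$ into finitely many bins of width $O(\epsilon)$, so that $\SNR_i \approx \SNR_j$ holds automatically within a bin. Within a bin, users $i$ and $j$ form a bottleneck pair whenever the transmitter $T_i$ falls in the spherical shell of radius $\euc(T_j,R_j)$ about $R_j$; since $\pr_T$ is uniform, this occurs with probability $\Theta(\epsilon)$, bounded below uniformly once $R_j$ is away from the boundary of $\D$. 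This is exactly where the smooth-boundary hypothesis on $\D \times \D$ enters, ensuring the relevant shells keep a constant fraction of their mass inside $\D$ for all but an $o(K)$ set of boundary users. The resulting bottleneck graph on the $\Theta(\epsilon K)$ users in a bin thus has expected degree $\Theta(\epsilon^2 K) \tends$, and the key step is to upgrade this into a near-perfect matching. The edges are not independent, since each involves the shared positions $(T_i,R_i)$, so I would control the number of bottleneck pairs by a second-moment estimate exploiting the fresh independent randomness in $(T_j,R_j)$ for distinct $j$, and then deduce a matching leaving only $o(\epsilon K)$ users unmatched per bin by a standard argument for dense graphs. Summing the deficiencies over the $O(1/\epsilon)$ bins gives $|U| = o(K)$. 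It is this dependency in the matching step, rather than any information-theoretic point, that I expect to require the most care.
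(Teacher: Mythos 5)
Your overall architecture---the interference alignment scheme of \cite{nazer} plus the law of large numbers for the lower bound; genie-aided decoupling into disjoint two-user channels, a per-pair bottleneck bound, the single-user bound for unmatched users, and truncation to control $\sum_{i \in U} S_{ii}$ for the upper bound---is exactly the paper's. The genuine divergence, and the gap, is the step you yourself flag: extracting a near-perfect matching from your ``bottleneck graph''. In the deterministic-fading model every edge of that graph is a deterministic function of the vertex attributes $(T_i,R_i)$ and $(T_j,R_j)$; there is no fresh per-edge randomness at all. A second-moment estimate controls the number of edges (or degrees), but a near-perfect matching requires K\"{o}nig/Hall-type conditions simultaneously at all subset sizes, and the blocking-pair computations that make this routine (the paper's Equation (\ref{eq:walkup}), and Lemma \ref{lem:match2} in the random-fading extension) rely on conditionally independent edges---in Lemma \ref{lem:match2} the fresh $W_{ij}$ supply exactly this, and nothing analogous exists in your graph. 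Concentration of edge counts is consistent with linear-sized blocking configurations, so your outline as written does not exclude $\Theta(\epsilon K)$ unmatched users per bin. A smaller but real point: the genie/degradedness argument behind Lemma \ref{lem:twouser} needs the exact one-sided inequality $\INR_{ji} \geq \SNR_j$; ``all three distances nearly equal'' is not sufficient, so your shell condition must be the one-sided version (cross distance at most the direct distance, within $\epsilon$), or the reduction to the multiple-access bound fails.

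The paper's resolution is worth absorbing because it makes your hard step trivial rather than solving it. Instead of binning by $\SNR$, it partitions the joint domain $\D \times \D$ into cubes $\B{u}{v}$ of side $1/M$ and pairs $\B{u}{v}$ with its reflection $\B{u-e}{v+e}$, where $\vc{e} = E(\idc{u}{v})$ records the orthant relative to the diagonal. Lemma \ref{lem:boxing} shows that \emph{any} link in $\B{u}{v}$ may be matched with \emph{any} link in $\B{u-e}{v+e}$: coordinatewise the geometry forces $\euc(T_j,R_j) \geq \euc(T_j,R_i) \geq \euc(T_i,R_i)$ deterministically, hence $\SNR_j \leq \INR_{ji} \leq \SNR_i$, and Lemma \ref{lem:twouser} applies to every cross pair. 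Matching then degenerates to pairing occupancy counts, whose concentration is the elementary binomial bound of Lemma \ref{lem:full}; the per-pair bound $\log(1 + 2f(\euc(\idc{u}{v})))$ is summed as a Riemann sum over boxes (this plays the role of your $O(\epsilon)$-wide $\SNR$ bins), and the spine, boundary and leftover boxes give the $o(K)$ unmatched set, handled as you propose via Lemma \ref{lem:tailbd}. If you want to complete your route, the natural repair is precisely this spatial localization: constrain both members of a prospective pair so that the bottleneck ordering holds surely, rather than fighting a matching problem in a dependent geometric graph.
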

\begin{proof}
We  break the probability into two terms which we deal
with separately. That is, writing
$E = \ep S_{ii} = \half \ep \log(1 + 2 \SNR)$,
\begin{equation} \label{eq:unionbd}
 \pr \left( \left| \frac{\csum}{K} - E \right| > \epsilon \right)
= \pr \left( \frac{\csum}{K} - E  < -\epsilon \right) +
\pr \left( \frac{\csum}{K} - E  > \epsilon \right).\end{equation}

Bounding the first term of (\ref{eq:unionbd}) corresponds to the achievability part of the proof.
Bounding the second term of (\ref{eq:unionbd}) corresponds to the converse part, and
represents our major contribution.

The first term of (\ref{eq:unionbd}) 
can be bounded relatively simply, using an achievability
argument based on an interference alignment scheme presented 
by Nazer, Gastpar, Jafar and Vishwanath \cite{nazer}. 
Theorem 3 of \cite{nazer} implies that the rates $\R{i} =
1/2 \log(1 + 2 \SNR_i) = S_{ii}$ are achievable. This implies that
$\csum \geq \sum_{i=1}^K S_{ii}$. This allows us to bound the first term in
Equation (\ref{eq:unionbd}) as 
\begin{eqnarray}
\pr \left( \frac{\csum}{K} - E  < -\epsilon \right)
& \leq &  \pr \left( \frac{\sum_{i=1}^K (S_{ii}-E)}{K}   < -\epsilon \right).
\label{eq:lln}
\end{eqnarray}
Note that Lemma \ref{lem:spatsep} below implies that the spatial
separation condition holds in the setting of Theorem \ref{thm:main}.
Hence the conditions of Lemma \ref{lem:tailbd} hold, implying that $S_{ii}$ has finite variance
by Equation (\ref{eq:exptail}).
This means that Equation (\ref{eq:lln})
 can be bounded by $\var(S_{ii})/(K \epsilon^2)$, and tends to zero
at rate $O(1/K)$.

We consider the properties of the second term of Equation (\ref{eq:unionbd}) in
the remainder of the paper, completing the proof of the theorem at the 
end of Section \ref{sec:proof}. 
\end{proof}
Theorem \ref{thm:main} can be interpreted in the same way as Theorem 1 of \cite{jafar},
that `each user is able to achieve the same rate that he would achieve if he
had the channel to himself with no interferers, half the time'. 

The theorem is presented
under the assumptions of uniform $\pr_R$ and $\pr_T$ with deterministic fading
for the sake of simplicity of exposition -- we believe that the main
result will be robust to relaxation of these conditions.
 In Section \ref{sec:non-unif} we consider whether the $\pr_R$ and $\pr_T$ need
necessarily be uniform. In Section \ref{sec:randfad} we introduce a variant of the model
with a random fading term. Although the theorem is based on probabilistic arguments, 
 in Section \ref{sec:algorithm} we describe an associated algorithm which
bounds the sum capacity of arbitrary Gaussian interference networks.
\subsection{Relation to previous work}
As reviewed in more detail by Jafar \cite{jafar}, recently  
progress has been made in several directions towards understanding 
the capacity of Gaussian interference networks.

In problems concerning networks with a large number of 
nodes, work of Gupta and Kumar \cite{gupta} uses
techniques based on Voronoi tesselations to establish scaling laws
(see also Xue and Kumar \cite{xue} for a review of the information theoretical
techniques that can be applied to this problem). Under a similar model of
dense random network placements, though using the same points as both transmitters
 and receivers,
  {\"O}zg{\"u}r, L{\'e}v{\^e}que and Tse \cite{ozgur2, ozgur} use a hierarchical scheme, where nodes are successively
assembled into groups of increasing size, each group collectively acting as a MIMO
transmitter or receiver, and restricting to transmissions at a common
rate. Theorems 3.1 and 3.2 of \cite{ozgur} show that
for any $\epsilon > 0$ there
exists a constant $c_\epsilon$ and a fixed constant $c_1$ such that 
\begin{equation} \label{eq:ozgur}
c_\epsilon K^{1-\epsilon} \leq \csum \leq c_1 K \log K.
\end{equation}
These bounds are close to stating that $\csum$
grows like $K$,  but without 
the explicit constant
that Jafar \cite{jafar} and Theorem \ref{thm:main} of this paper achieve. 
In this paper, we produce a version of the upper bound of 
Equation (\ref{eq:ozgur}) without the logarithmic factor 
and being explicit about the constant $c_1$, although note that this result is proved under a model 
that differs from that of \cite{ozgur} in the fact that we have a total of $2K$ nodes rather than $K$.

An alternative approach to Gaussian interference networks is to consider the limit of the capacity
as the $\SNR$
tends to infinity, with a fixed number of users. Cadambe and Jafar \cite{cadambe} 
used interference alignment to deduce the limiting behaviour within
$o(\log(\SNR))$.  These techniques were extended by the same authors \cite{cadambe2}
to more general models  in the presence of feedback and other effects.

For small networks, the classical bounds due to Han and Kobayashi \cite{han2}
for the two-user Gaussian interference network
have recently been extended and refined. For example Etkin, Tse and Wang
\cite{etkin} have produced a characterization of capacity accurate to within one
bit. These results were extended by Bresler, Parekh and Tse \cite{bresler},
using insights based on a deterministic channel which approximates the Gaussian
channel with sufficient accuracy, to prove results for many-to-one and one-to-many
Gaussian interference channels.

We briefly mention an alternative proof of Theorem 5 of \cite{jafar}, which provides
a faster rate of convergence than that achieved in Equation (\ref{eq:jafar}). 
We first review some facts from graph theory, concerning
random bipartite graphs formed by
two sets of vertices of size $N$, with edges present independently with probability
$\delta$. Erd\"{o}s and R\'{e}nyi \cite{erdos} prove that
the probability of a complete matching 
failing to exist tends to $0$ for any $\delta = \delta(K) = (\log K + c_K)/K$,
where $c_K \rightarrow \infty$. 
We recall  the argument where $\delta$ is fixed, so that we
can be precise about the bounds, rather than just working asymptotically.
As in, for example, Walkup \cite{walkup2}, we say that a subset $\AAA_S \subset 
\AAA$ of
size $k$ and a subset $\BBB_S \subset \BBB$ of size $N-k+1$ form a blocking pair of 
size $k$ if no
edge of the graph connects $\AAA_S$ to $\BBB_S$. Equation (1) of \cite{walkup2} uses
K\"{o}nig's theorem to deduce
that 
\begin{eqnarray} 
\pr( \mbox{no matching $\AAA$ to $\BBB$} ) & \leq &
\sum_{k=1}^N \sum_{|\AAA_S| = k, |\BBB_S| = N-k+1}  \pr \left( (\AAA_S, \BBB_S) 
\mbox{ blocking pair} \right) \label{eq:walkup} \\
& = & 2 \sum_{k=1}^{(N+1)/2} \binom{N}{k} \binom{N}{k-1} (1-p)^{k(N-k+1)}. \nonumber
\end{eqnarray}
By splitting the sum into terms where $k \leq \sqrt{N}$ and $k \geq \sqrt{N}$, 
so that $(1-p)^{k(N-k+1)}$ is bounded by $\exp(-p (N+1)/2)$ and $\exp(-p N^{3/2}/2)$
respectively, a bound of
$$ 2 \sqrt{N} N^{2 \sqrt{N}} \exp(-p(N+1)/2) + 2^{2N} \exp(-p N^{3/2}/2)$$ 
can be obtained.
We deduce that the probability
of a complete matching failing to exist decays at an exponential rate in $N$.

To prove Theorem 5 of \cite{jafar},
we divide the receiver-transmitter links 
into two groups $\AAA$ and $\BBB$ of 
size $N = \lfloor K/2 \rfloor$, and consider complete matchings on the bipartite graph between
them. Each edge is present in the bipartite graph if the corresponding $\INR$ lies 
in a particular range (see Lemma \ref{lem:twouser} below for details), 
which occurs independently with probability
$\delta$.
For each pair that is successfully matched up, the corresponding two-user channel becomes an 
$\epsilon$-bottleneck, and
contributes $\leq \log (1+ 2 \SNR) + \epsilon$ to the sum
capacity. Hence, the high probability of a complete matching implies exponential decay of
$\pr(|\csum/K - \half \log(1+2 \SNR)| > \epsilon)$, improving the $O(K^{-2})$ rate 
in Equation (\ref{eq:jafar}).
\section{Technical lemmas} \label{sec:proofupper}
We continue to work towards 
our proof of Theorem \ref{thm:main}, by establishing some technical results. First in Section
\ref{sec:condition}, we identify a condition under which the sum capacity of
the two user interference channel can be bounded. Next, in Section \ref{sec:tails},
 we show that the spatial separation condition of
Definition \ref{def:spatsep} holds under a variety of conditions. Further we show that
spatial separation 
and the decaying condition Definition \ref{def:decay}
together imply that 
no $\SNR$ can be `too large'. 
\subsection{Bounds on two user channel} \label{sec:condition}
First, we identify a condition on the values of 
$\SNR$ and $\INR$ under which the capacity of the two user 
interference channel can be bounded.
The proof of the following result is based on Lemma 1
of \cite{jafar}, which gave the key definition of a `bottleneck state', deducing
that in the case $\SNR_i = \SNR_j = \INR_{ji} = \SNR$, the sum capacity equals
$\log(1+2\SNR)$. We reproduce the argument used there, to deduce a stability
result that allows us to deduce when an $\epsilon$-bottleneck state occurs.
\begin{lemma} \label{lem:twouser}
For any $i$,$j$, consider the two user inferference channel defined
by
\begin{eqnarray*}
Y_i &= & \exp(i \phi_{ii}) \sqrt{\SNR_i} X_i + \exp(i \phi_{ji}) \sqrt{\INR_{ji}} X_j + Z_i, \\
Y_j &= & \exp(i \phi_{ij}) \sqrt{\INR_{ij}} X_i + \exp(i \phi_{jj})  \sqrt{\SNR_j} X_j + Z_j,
\end{eqnarray*}
where $Z_i$ and $Z_j$ are IID standard complex Gaussians,  and $\phi_{ij}$ are
independent $U[0,2\pi]$ random variables independent of all other terms.

If $\INR_{ji} \geq \SNR_{j}$ then any reliable transmission rates
satisfy
\begin{equation} \label{summedconstraints}
\R{i} + \R{j} \leq \log(1 + \INR_{ji} + \SNR_i).\end{equation}
\end{lemma}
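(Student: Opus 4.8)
The plan is to prove this sum-rate bound by a genie-aided multiple-access converse at receiver $i$, following the structure of Lemma 1 of \cite{jafar}. The right-hand side $\log(1+\INR_{ji}+\SNR_i)$ is exactly the sum capacity of the Gaussian multiple-access channel seen at receiver $i$, where both $\vc{X}_i$ and $\vc{X}_j$ arrive with powers $\SNR_i$ and $\INR_{ji}$. The goal is therefore to show that receiver $i$ can reliably recover \emph{both} messages $W_i$ and $W_j$, so that the pair $(\R{i},\R{j})$ is constrained by this MAC. Throughout I work with length-$N$ blocks, write $\epsilon_N \to 0$ for the Fano terms associated with reliable decoding, and exploit the assumption that $W_i$ and $W_j$ are each decoded with vanishing error probability at their intended receivers.

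First I would record the consequences of reliable decoding via Fano's inequality: $H(W_i \mid \vc{Y}_i) \leq N\epsilon_N$ and $H(W_j \mid \vc{Y}_j) \leq N\epsilon_N$. The crux is the following genie step, which is where the hypothesis $\INR_{ji}\geq\SNR_j$ enters. Grant receiver $i$ its own message $W_i$ as side information; since it decodes $W_i$ anyway this is harmless. Knowing $\vc{X}_i$, receiver $i$ may subtract the $\vc{X}_i$-contribution from $\vc{Y}_i$ to obtain a clean observation $\tilde{\vc{Y}}_i = \sqrt{\INR_{ji}}\,\vc{X}_j + \vc{Z}_i$ (up to a known phase rotation). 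Granting receiver $j$ the same genie $W_i$ yields the clean observation $\tilde{\vc{Y}}_j = \sqrt{\SNR_j}\,\vc{X}_j + \vc{Z}_j$. Because removing known interference cannot hurt, receiver $j$ still decodes $W_j$ from $\tilde{\vc{Y}}_j$; and because $\INR_{ji}\geq\SNR_j$, the channel $\tilde{\vc{Y}}_j$ is a stochastically degraded version of $\tilde{\vc{Y}}_i$ --- one recovers the law of $\tilde{\vc{Y}}_j$ from $\tilde{\vc{Y}}_i$ by scaling and adding an independent complex Gaussian of variance $1-\SNR_j/\INR_{ji}\geq 0$. Hence any decoder succeeding on $\tilde{\vc{Y}}_j$ can be simulated on $\tilde{\vc{Y}}_i$, so genie-aided receiver $i$ also decodes $W_j$, giving $H(W_j \mid \vc{Y}_i, W_i) \leq N\epsilon_N$ by a second application of Fano.

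With both messages available at receiver $i$, I would combine the two Fano bounds. Using independence of $W_i$ and $W_j$, then the chain rule and the fact that $\tilde{\vc{Y}}_i$ is a bijective function of $(\vc{Y}_i, W_i)$,
\begin{align*}
N(\R{i}+\R{j}) &= H(W_i)+H(W_j) \\
&\leq I(W_i;\vc{Y}_i) + I(W_j;\vc{Y}_i \mid W_i) + 2N\epsilon_N \\
&= I(W_i,W_j;\vc{Y}_i) + 2N\epsilon_N \leq I(\vc{X}_i,\vc{X}_j;\vc{Y}_i) + 2N\epsilon_N,
\end{align*}
the last step by the data-processing inequality along $(W_i,W_j)\to(\vc{X}_i,\vc{X}_j)\to\vc{Y}_i$. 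Finally I would single-letterize: since the channel is memoryless, $I(\vc{X}_i,\vc{X}_j;\vc{Y}_i) \leq \sum_{n} I(X_i[n],X_j[n];Y_i[n])$, and each term is bounded by the Gaussian max-entropy argument. As $X_i[n]$ and $X_j[n]$ are independent, the received signal power at symbol $n$ is at most $\SNR_i\, p_i[n] + \INR_{ji}\, p_j[n]$ (the cross term has zero mean) plus unit noise, so $I(X_i[n],X_j[n];Y_i[n]) \leq \log(1+\SNR_i\, p_i[n]+\INR_{ji}\, p_j[n])$. Averaging over $n$, invoking concavity of $\log$ with the power constraints $\frac{1}{N}\sum_n p_i[n]\leq 1$, and letting $N\tends$ (so $\epsilon_N\to 0$) yields $\R{i}+\R{j}\leq\log(1+\INR_{ji}+\SNR_i)$.

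The main obstacle is the degradedness step: I must verify carefully that the stochastic-degradation comparison between $\tilde{\vc{Y}}_i$ and $\tilde{\vc{Y}}_j$ is valid for the complex model with independent uniform phases, ensuring the noise-matching construction (adding Gaussian noise of variance $1-\SNR_j/\INR_{ji}$) genuinely reproduces the statistics of $\tilde{\vc{Y}}_j$, and that the random phases, known at the receivers, act only as invertible rotations that leave decodability unaffected. This is precisely the point at which the hypothesis $\INR_{ji}\geq\SNR_j$ is indispensable, since a negative noise variance would be meaningless and the comparison would fail.
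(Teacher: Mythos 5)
Your proof is correct and takes essentially the same route as the paper's: a genie supplies $W_i$ to both receivers, the hypothesis $\INR_{ji} \geq \SNR_j$ makes receiver $j$'s cleaned observation a (stochastically degraded) version of receiver $i$'s, so receiver $i$ can decode both messages, and the rates are then constrained by the two-user MAC sum capacity $\log(1+\SNR_i+\INR_{ji})$. The only difference is one of detail --- the paper argues informally and cites \cite[Equation~(6.6)]{tse} for the MAC step, whereas you carry out the Fano, degradation and single-letterization arguments explicitly.
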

\begin{proof}
We adapt the argument of Lemma 1 of \cite{jafar}. That is, consider
reliable transmission rates $\R{i}$, $\R{j}$. Since the transmissions are
reliable, then receiver $i$ can determine $X_i$ with an arbitrarily low
probability of error.

Again, reliable transmission
rates would remain reliable if receiver $j$ was presented with $X_i$ by a 
genie. In that case, it is easier for receiver $i$ to determine $X_j$ than it is 
for receiver $j$ to do so (since the weighting $\INR_{ji}$ is larger than $\SNR_j$).
However, we know that receiver $j$ can recover $X_j$, since the rate $\R{j}$ is 
reliable, so we deduce that
receiver $i$ must be able to do the same.

Since receiver $i$ can determine $X_i$ and $X_j$ reliably, then these messages
must have been transmitted at a sum rate lower than the sum capacity of
a two-user multiple access channel, see for example \cite[Equation (6.6)]{tse},
 which is $\log(1 + \SNR_i + \INR_{ji})$.
\end{proof}
\subsection{Decay of tails} \label{sec:tails}
Recall that the node positioning model given in
Definition \ref{def:nodeplace} involves independent positions of nodes
sampled from identical distributions $\pr_T$ and $\pr_R$. We give examples
of conditions under which the spatial separation property of Definition
\ref{def:spatsep} holds.
\begin{lemma} \label{lem:spatsep} \mbox{ }
\begin{enumerate}
\item \label{itm:ex1}
If either $\pr_T$ or $\pr_R$ has a density with respect to Lebesgue measure
which is bounded above on $\D$ then $\pr_T$ and $\pr_R$ are spatially separated.
\item \label{itm:ex2}
If $\pr_T$ and $\pr_R$ are supported on sets $\TT$ and $\RR$ that 
are physically separated, in 
that 
$$ \euc_* = \inf \{ \euc( t, r): t \in \TT, r \in \RR \} > 0,$$
then $\pr_T$ and $\pr_R$ are spatially separated.
\end{enumerate}
\end{lemma}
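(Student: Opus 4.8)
The plan is to establish each part of Lemma \ref{lem:spatsep} by directly verifying the inequality \eqref{eq:spatsep}, namely $\pr(\euc(T,R) \leq s) \leq \csep s^{\dsep}$ for suitable constants. Throughout I would take $T \sim \pr_T$ and $R \sim \pr_R$ to be independent, as in Definition \ref{def:nodeplace}.

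For part \eqref{itm:ex1}, suppose without loss of generality that $\pr_R$ has a density $g$ with respect to Lebesgue measure with $g \leq M$ on $\D$. First I would condition on the value of $T$, writing
$$ \pr(\euc(T,R) \leq s) = \ep_T \left[ \pr( R \in B(T,s) \mid T ) \right], $$
where $B(T,s)$ denotes the Euclidean ball of radius $s$ centred at $T$. The inner probability is $\int_{B(T,s) \cap \D} g(\vc{x}) \, d\vc{x} \leq M \cdot \mathrm{vol}(B(T,s))$, and since we work in $\re^D$ the volume of a ball of radius $s$ is $\kappa_D s^D$ for the usual dimensional constant $\kappa_D$. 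This bound is uniform in $T$, so taking the expectation over $T$ gives $\pr(\euc(T,R) \leq s) \leq M \kappa_D s^D$. Hence spatial separation holds with $\dsep = D$ and $\csep = M \kappa_D$. This step is essentially routine; the only care needed is the intersection with $\D$, which only helps the bound.

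For part \eqref{itm:ex2}, the supports $\TT$ and $\RR$ are separated by $\euc_* > 0$, so for any $t \in \TT$ and $r \in \RR$ we have $\euc(t,r) \geq \euc_*$. Consequently $\pr(\euc(T,R) \leq s) = 0$ for all $s < \euc_*$, and the bound is vacuous (trivially satisfiable) for $s \geq \euc_*$: on that range I would simply choose $\csep$ large enough that $\csep s^{\dsep} \geq 1$ whenever $s \geq \euc_*$, for instance $\csep = \euc_*^{-\dsep}$ with any convenient $\dsep$, since $\pr(\euc(T,R) \leq s) \leq 1$ always. Thus \eqref{eq:spatsep} holds with these constants and the claim follows.

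I do not expect a serious obstacle in either part. The mildly delicate point is that Definition \ref{def:spatsep} demands the inequality \emph{for all} $s$, including both the small-$s$ regime (where the geometry does the work) and the large-$s$ regime (where one merely needs the power $s^{\dsep}$ to dominate the trivial bound of $1$); I would make sure the chosen $\csep$ and $\dsep$ cover both ranges simultaneously. In part \eqref{itm:ex1} the density bound already controls all $s$ at once through the volume estimate, whereas in part \eqref{itm:ex2} one splits at $s = \euc_*$. No continuity or regularity of $\D$ beyond finite extent is required.
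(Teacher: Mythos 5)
Your proposal is correct and follows essentially the same route as the paper: in part \ref{itm:ex1} you bound the probability that one node falls in a ball of radius $s$ around the other using the density bound and the volume $\kappa_D s^D$ (the paper integrates $\pr_T(B_s(y))$ against $d\pr_R(y)$, which is the same computation with the roles of the two distributions exchanged), and in part \ref{itm:ex2} you split at $s = \euc_*$ exactly as the paper does, taking $\csep = \euc_*^{-\dsep}$ so the bound is trivial for $s \geq \euc_*$. The only cosmetic difference is that the paper fixes $\dsep = D$ in part \ref{itm:ex2} where you allow any exponent, which is equally valid since the definition only requires some constants to exist.
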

\begin{proof}
\ref{itm:ex1}
If $\pr_T$ has a density  with respect to Lebesgue measure which is bounded above
by $C$,
then for any ball $B_s(y)$ of radius $s$ centred on $y$, the probability $\pr_T( B_s(y))
\leq C V_D s^D$, where $V_D$ is the volume of a Euclidean ball of unit radius
in $\re^D$.
Hence
$$ \pr( \euc(T,R) \leq s) = \int  \pr_T( B_s(y)) d\pr_R(y) \leq C V_D s^D 
\int d\pr_R(y) = C V_D s^D,$$ and the result follows,
taking $\csep = C V_D$ and $\dsep = D$. The corresponding result for
$\pr_R$ follows on exchanging $\pr_R$ and $\pr_T$ in the displayed equation above.

\ref{itm:ex2} If $s < \euc_*$, then $\pr(\euc(T,R) \leq s) = 0$.  If $s \geq \euc_*$, then we have
$\pr( \euc(T,R) \leq s) \leq 1 \leq s^D/\euc_*^D$, so that we can take $\csep = 1/\euc_*^D$
and $\dsep = D$.
\end{proof}
Next we show that combining the spatial separation condition of
Definition \ref{def:spatsep} and the decaying condition Definition \ref{def:decay}
gives us good control of the maximum of $S_{ii}$. The argument is similar
to that given in Theorem 3.1 of \cite{ozgur}.
\begin{lemma} \label{lem:tailbd}
Consider an  IID network with spatially separated $\pr_T$ and 
$\pr_R$, with signals decaying at rate $\alpha$.
The probability that the maximum of the $K$ 
random variables $S_{ii}$ is large tends to zero:
\begin{equation} \label{eq:extreme}
 \lim_{K \tends} \pr \left( \max_{1 \leq i \leq K} S_{ii} \geq \frac{\alpha \log K}{\dsep}
\right) = 0, \end{equation}
where $\dsep$ is the separation exponent from 
Definition \ref{def:spatsep}.
\end{lemma}
\begin{proof}
 We combine Definitions \ref{def:spatsep} and \ref{def:decay}.
Since all $S_{ij}$ have the same marginal distribution, it is enough to deduce that, for
any $u \geq 1$,
\begin{eqnarray}
\pr( S_{ii} \geq u) & = & \pr( 1/2 \log(1 + 2 \SNR_i) \geq u) \nonumber \\
& = & \pr( \SNR_i \geq (\exp(2u) - 1)/2) \nonumber \\
& \leq & \pr( \SNR_i \geq \exp(2u)/3) \nonumber \\
& = & \pr( f(\euc(T_i, R_i)) \geq \exp(2u)/3) \nonumber \\
& \leq & \pr( \cdec/(\euc(T_i, R_i))^{\alpha} \geq \exp(2u)/3) \nonumber \\
& = & \pr( \euc(T_i, R_i) \leq (\cdec/3)^{1/\alpha} \exp(-2u/\alpha)) \nonumber \\
& \leq & \csep (\cdec/3)^{\dsep/\alpha} \exp(-2u \dsep/\alpha)), \label{eq:exptail}
\end{eqnarray}
where $\csep$ is the separation 
constant from Equation (\ref{eq:spatsep}) and $\cdec$ is the decay 
constant from Equation (\ref{eq:decay}).
The result follows from Equation (\ref{eq:exptail}) using the union bound:
\begin{eqnarray*}
\pr \left( \max_{1 \leq i \leq K} S_{ii} \geq \frac{\alpha \log K}{\dsep} \right) 
& \leq & K \pr \left( S_{ii} \geq \frac{\alpha \log K}{\dsep} \right) 
\leq  K \frac{\csep (\cdec/3)^{\dsep/\alpha}}{K^2},
\end{eqnarray*}
which tends to zero as $K \tends$. 
\end{proof}
\section{Proof of Theorem \ref{thm:main}} \label{sec:proof}
In this section we complete the proof of the upper bound in Theorem
\ref{thm:main}. 
We calculate bounds on sum capacity using a strategy suggested by the work
of Jafar \cite{jafar}, who proves that the presence of a large number of disjoint 
two-user channels, each close to being a `bottleneck state', allows good control 
of the channel capacity. 

First in Section \ref{sec:spatial} we partition space
into regions $\B{u}{v}$.   Lemma \ref{lem:boxing} gives
an upper bound on the sum capacity of a two-user channel made up of points
in neighbouring regions.
Further, Lemma \ref{lem:full}
tells us that we can control the number of 
links in each region.
In Section \ref{sec:matching} we complete the argument by
matching elements of box $\B{u}{v}$ with elements of $\B{u-e}{v+e}$.
This allows us to control the overall sum capacity of the $K$ users, and 
to complete the proof of Theorem \ref{thm:main}. 
\subsection{Spatial partitioning model} \label{sec:spatial}
We consider each transmitter--receiver pair as a point in the joint
domain $\D \times \D$. 
Each transmitter--receiver pair $(T_i,R_i) \in
\D \times \D$ can
be placed in well-defined disjoint regions $\B{u}{v}$, allowing us to control the
performance of the corresponding link.
We write $x^{(l)}$ for the $l$th coordinate of the vector $\vc{x} =
(x^{(1)}, \ldots, x^{(D)})$.
\begin{definition}
Given $M$, we partition the space $\re^{2D}$ and hence the 
joint transmitter--receiver domain $\D \times \D$ 
by a regular grid of spacings $1/M$.
For each $\vc{u} \in \Z^{D}$ and  $\vc{v} \in \Z^{D}$, 
we define boxes labelled by their `bottom-left' corner as
\begin{eqnarray} \label{eq:boxdef}
\B{u}{v} & = & \left\{ (\idc{x}{y}):   
\frac{u^{(l)}}{M} \leq x^{(l)} < \frac{(u^{(l)} +1)}{M},
\frac{v^{(l)}}{M} \leq y^{(l)} < \frac{(v^{(l)} +1)}{M} \mbox{ \; for all $l$} \right\}.
\end{eqnarray} 
We write $\SSC = \{ (\idc{u}{v}):  \B{u}{v} \bigcap \left( \D \times \D  \right)
\neq \emptyset \}$ for the set
of possible labels $(\idc{u}{v})$, and
split $\D \times \D$ into orthants, indexed by vectors $\vc{E}
(\idc{u}{v})
\in \{-1, 0, 1 \}^D$, with co-ordinate 
$$ E^{(l)}(\idc{u}{v}) = \left\{ \begin{array}{ll}
1 & \mbox{ \;\;\; if $v^{(l)} - u^{(l)} > 0$,} \\
0 & \mbox{ \;\;\; if $v^{(l)} - u^{(l)} = 0$,} \\
-1 & \mbox{ \;\;\; if $v^{(l)} - u^{(l)} < 0$.} \\
\end{array} \right.$$

We introduce two subsets of $\SSC$ that we will not 
attempt to match, according to the rule that $\B{u}{v}$ is matched with 
$\B{u-e}{v+e}$, where 
$\vc{e} = E(\idc{u}{v})$.
\begin{enumerate}
\item First, the `spine' $\SSspine =
\{ (\idc{u}{v}): E^{(l)}(\idc{u}{v}) = 0 \mbox{ for some
$l$} \}.$
\item Second, the `edge' $\SSedge = \{ (\idc{u}{v}): \left(\B{u}{v} \bigcup \B{u-e}{v+e}
\right) \not\subseteq \D \times \D
 \}$. That is, the regions which overlap the boundary of $\D \times \D$, or which are matched with a
 region that overlaps the boundary.
\end{enumerate}
\end{definition}
The control of position obtained by matching links in two
neighbouring regions converts into control of the values of $\SNR$ and 
$\INR$, allowing the sum capacity of the two pairs to be bounded using Lemma
\ref{lem:twouser}.
Figure \ref{fig:match} gives a schematic diagram of a pair of boxes which we attempt to
match using the construction in Lemma \ref{lem:boxing}, plotting both transmitter and
receiver positions singly on $\D$ rather than jointly on $\D \times \D$.

\begin{center}
\begin{figure}[!htbp]
\centering
\begin{picture}(215,215)(18,18)
\put(20,20){\line(0,1){210}}
\put(230,20){\line(0,1){210}}
\put(20,20){\line(1,0){210}}
\put(20,230){\line(1,0){210}}
\multiput(200,50)(-30,30){2}{\line(0,1){30}}
\multiput(230,50)(-30,30){2}{\line(0,1){30}}
\multiput(200,50)(-30,30){2}{\line(1,0){30}}
\multiput(200,80)(-30,30){2}{\line(1,0){30}}
\multiput(80,140)(-30,30){2}{\line(0,1){30}}
\multiput(110,140)(-30,30){2}{\line(0,1){30}}
\multiput(80,140)(-30,30){2}{\line(1,0){30}}
\multiput(80,170)(-30,30){2}{\line(1,0){30}}
\put(220,70){\circle{3}}
\put(224,70){$\vc{R_j}$}
\put(195,100){\circle{3}}
\put(199,100){$\vc{R_i}$}
\put(91,145){\circle{3}}
\put(95,145){$\vc{T_i}$}
\put(60,185){\circle{3}}
\put(64,185){$\vc{T_j}$}
\put(178,28){\vector(1,1){20}}
\put(170,23){$\vc{u} - \vc{e}$}
\put(148,58){\vector(1,1){20}}
\put(138,53){$\vc{u}$}
\put(58,118){\vector(1,1){20}}
\put(48,113){$\vc{v}$}
\put(28,148){\vector(1,1){20}}
\put(10,143){$\vc{v} + \vc{e}$}
\end{picture}
\caption{Schematic plot of matched boxes, where boxes are labelled according to their
bottom-left corner.
We consider the case where $\D = [0,1]^2$, and plot the receiver and transmitter
positions on the same square, with $\vc{e} = E(\idc{u}{v}) = (-1,1)$. The key
property is that we can observe that $\euc(T_j,R_j) \geq \euc(T_j, R_i) \geq \euc(T_i, R_i)$.
We prove this rigorously in order to prove Lemma \ref{lem:boxing}. \label{fig:match}}
\end{figure}
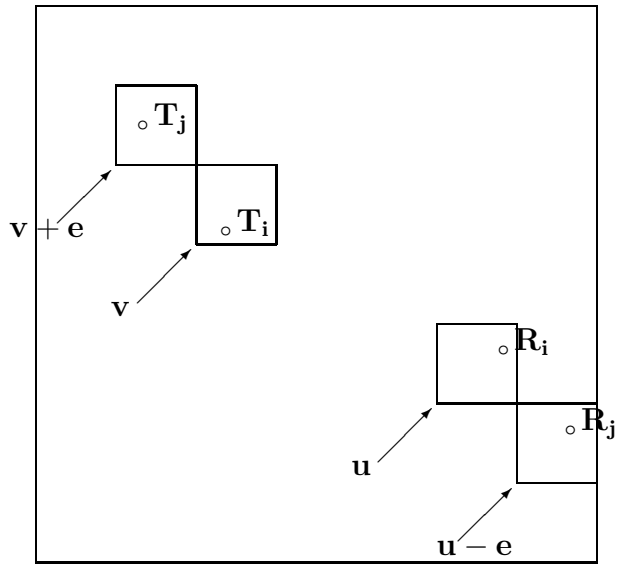
\end{center}
\begin{lemma} \label{lem:boxing}
Suppose the receiver-transmitter pair $(R_i,T_i)$ appears in region $\B{u}{v}$ and the 
receiver-transmitter pair $(R_j,T_j)$ appears in region $\B{u-e}{v+e}$,
where $(\idc{u}{v}) \in \SSC \setminus \SSspine$ and $\vc{e} = E( \idc{u}{v})$,
 then any reliable rates for those two links satisfy
$$ \R{i} + \R{j} \leq \log \left( 1 + 2f(\euc(\idc{u}{v})) \right),$$
where $\euc(\idc{u}{v})$ is the minimum transmitter--receiver 
distance between $(R_i,T_i) \in \B{u}{v}$.
\end{lemma}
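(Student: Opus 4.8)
The plan is to reduce the statement to Lemma \ref{lem:twouser} by establishing a single geometric fact about the three relevant transmitter--receiver distances, and then pushing the monotonicity of the attenuation function $f$ through the resulting inequalities. Concretely, writing $T_i, R_i$ for the pair in $\B{u}{v}$ and $T_j, R_j$ for the pair in $\B{u-e}{v+e}$, I would first prove the chain
$$ \euc(T_i, R_i) \leq \euc(T_j, R_i) \leq \euc(T_j, R_j), $$
which is exactly the geometric assertion emphasised in Figure \ref{fig:match}. Since $f$ is monotonically decreasing, this translates into $\SNR_i \geq \INR_{ji} \geq \SNR_j$, where $\SNR_i = f(\euc(T_i,R_i))$, $\INR_{ji} = f(\euc(T_j,R_i))$ and $\SNR_j = f(\euc(T_j,R_j))$.

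The real work lies in the geometric chain, which I would establish coordinate by coordinate. Using the Pythagorean decomposition $\euc(\vc{a},\vc{b})^2 = \sum_l (a^{(l)} - b^{(l)})^2$, it suffices to show that each summand is monotone along the chain, i.e.\ $|T_i^{(l)} - R_i^{(l)}| \leq |T_j^{(l)} - R_i^{(l)}| \leq |T_j^{(l)} - R_j^{(l)}|$ for every coordinate $l$. The crucial input is the hypothesis $(\idc{u}{v}) \in \SSC \setminus \SSspine$, which forces $e^{(l)} = E^{(l)}(\idc{u}{v}) \neq 0$, hence $e^{(l)} = \pm 1$, for every $l$. For a coordinate with $e^{(l)} = 1$ we have $v^{(l)} \geq u^{(l)} + 1$, so box membership forces $T_i^{(l)} < R_i^{(l)}$ with the transmitter strictly to the left; passing from box $\vc{u}$ to $\vc{u} - \vc{e}$ pushes $T_j^{(l)}$ strictly further left, while passing from $\vc{v}$ to $\vc{v}+\vc{e}$ pushes $R_j^{(l)}$ strictly further right. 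Reading off the three positions then gives the coordinate-wise ordering, and the case $e^{(l)} = -1$ is identical after reflecting the coordinate. Summing the squared coordinate gaps and taking square roots yields the distance chain.

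With the chain in hand, the remainder is routine. The condition $\INR_{ji} \geq \SNR_j$ required by Lemma \ref{lem:twouser} is precisely the right-hand inequality of the chain, so that lemma gives $\R{i} + \R{j} \leq \log(1 + \INR_{ji} + \SNR_i)$. The left-hand inequality $\INR_{ji} \leq \SNR_i$ then lets me replace $\INR_{ji}$ by $\SNR_i$ to obtain
$$ \R{i} + \R{j} \leq \log(1 + 2\SNR_i) = \log \left( 1 + 2 f(\euc(T_i,R_i)) \right). $$
Finally, since $\euc(T_i,R_i) \geq \euc(\idc{u}{v})$ by definition of the minimum in-box distance, and $f$ is decreasing, I conclude $\R{i} + \R{j} \leq \log(1 + 2 f(\euc(\idc{u}{v})))$, as claimed.

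I expect the one genuine obstacle to be the bookkeeping in the coordinate-wise monotonicity argument: one must check that the half-open box intervals genuinely separate the relevant positions in the correct order, and keep track of why $\SSspine$ has been excluded for exactly this purpose, since a coordinate with $e^{(l)} = 0$ would collapse the separation and break the ordering. Everything downstream is a direct application of Lemma \ref{lem:twouser} together with the monotonicity of $f$.
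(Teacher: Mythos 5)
Your proposal is correct and follows essentially the same route as the paper's proof: the coordinate-wise ordering forced by $e^{(l)} = \pm 1$ (exactly why $\SSspine$ is excluded) gives $\euc(\idc{u}{v}) \leq \euc(T_i,R_i) \leq \euc(T_j,R_i) \leq \euc(T_j,R_j)$, and then monotonicity of $f$ together with Lemma \ref{lem:twouser} yields the bound, just as in the paper (your mirror-image convention for which coordinate lies in the $\vc{u}$-box is harmless, since only the gaps $|T^{(l)}-R^{(l)}|$ enter). The one step you should state explicitly is the paper's opening remark: the two links sit inside a $K$-user network, so one first observes that presenting the receivers with all other users' messages can only improve reliable rates, which is what reduces the situation to the standalone two-user channel of Lemma \ref{lem:twouser}.
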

\begin{proof} The rates are only improved by being presented with the 
messages of all the other users, reducing the situation to that of 
Lemma \ref{lem:twouser}. For each $l$ such that
co-ordinate $e^{(l)} = 1$, by construction
(a) $T_i^{(l)} - R_i^{(l)} > 0$, (b) $R_j^{(l)} < u^{(l)} \leq R_i^{(l)}$,
(c) $T_j^{(l)} \geq v^{(l)} > T_i^{(l)}$.
Hence, by (b),
$$(T_j^{(l)} - R_i^{(l)}) = (T_j^{(l)} - R_j^{(l)}) 
+ (R_j^{(l)} - R_i^{(l)}) < (T_j^{(l)} - R_j^{(l)}),$$
and by (c),
$$(T_j^{(l)} - R_i^{(l)}) = (T_i^{(l)} - R_i^{(l)}) 
+ (T_j^{(l)} - T_i^{(l)}) > (T_i^{(l)} - R_i^{(l)}).$$
Overall then, in the case $e^{(l)} = 1$,
$$ 0 < (T_i^{(l)} - R_i^{(l)}) < (T_j^{(l)} - R_i^{(l)}) < (T_j^{(l)} - R_j^{(l)}).$$

A similar argument applies for
each $l$ with $e^{(l)} = -1$, with the order of the signs reversed.

Overall, we deduce that $\euc(T_j,R_j) \geq \euc(T_j, R_i) \geq \euc(T_i, R_i) \geq
d(\idc{u}{v})$, or that
$f(d(\idc{u}{v})) \geq \SNR_i \geq \INR_{ji} \geq \SNR_j$, so that
Lemma \ref{lem:twouser} applies, allowing us to bound
$$ \R{i} + \R{j} \leq \log(1 + \INR_{ji} + \SNR_i) \leq \log
\left(1 + 2f(d(\idc{u}{v}) ) \right),$$
as required.
\end{proof}

 Each of the $K$ transmitter--receiver links are placed
in the regions $\B{u}{v}$ where $(\idc{u}{v}) \in \SSC$.
Our node positioning
rule implies that each link is placed in region
$\B{u}{v}$ with some probability $\pB{u}{v}$ independently of the others.
We write $N_{\idc{u}{v}}$ for the total number of links
placed in region $\B{u}{v}$, noting that the marginal distribution of each 
$N_{\idc{u}{v}}$ is $\bino(K, \pB{u}{v})$.

\begin{lemma} \label{lem:full}
We can bound the probability that any of the regions contain a significantly different
number of links to that expected at random:
$$ \pr \left( \max_{(\idc{u}{v}) \in \SSC} \left| N_{\idc{u}{v}} - 
K \pB{u}{v} \right| \geq 
K^{\eta} \right) \leq K^{1-2\eta}.
$$
\end{lemma}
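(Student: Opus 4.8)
The plan is to control the maximum deviation by a union bound over the finitely many boxes, combined with Chebyshev's inequality applied to each box separately. The one point worth getting right is to retain the sharp binomial variance bound rather than a crude uniform one, since this is what makes the stated constant come out as exactly one, independently of how many boxes there are.

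First I would record the marginal behaviour of each count. Since each of the $K$ links is placed in $\B{u}{v}$ independently with probability $\pB{u}{v}$, the count $N_{\idc{u}{v}}$ is distributed as $\bino(K, \pB{u}{v})$, so that $\ep N_{\idc{u}{v}} = K \pB{u}{v}$ and $\var(N_{\idc{u}{v}}) = K \pB{u}{v}(1 - \pB{u}{v}) \leq K \pB{u}{v}$. I would emphasise that the counts in distinct boxes need not be independent (indeed they are negatively correlated), but this is irrelevant: a union bound uses only the marginal distributions.

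Next I would apply the union bound followed by Chebyshev's inequality to each term, keeping the variance in the form $K\pB{u}{v}$:
\begin{eqnarray*}
\pr\left( \max_{(\idc{u}{v}) \in \SSC} \left| N_{\idc{u}{v}} - K\pB{u}{v} \right| \geq K^\eta \right)
&\leq& \sum_{(\idc{u}{v}) \in \SSC} \pr\left( \left| N_{\idc{u}{v}} - K\pB{u}{v} \right| \geq K^\eta \right) \\
&\leq& \sum_{(\idc{u}{v}) \in \SSC} \frac{\var(N_{\idc{u}{v}})}{K^{2\eta}}
\leq \frac{K}{K^{2\eta}} \sum_{(\idc{u}{v}) \in \SSC} \pB{u}{v}.
\end{eqnarray*}
Since the boxes $\B{u}{v}$ partition $\D \times \D$, each link lands in exactly one of them, so $\sum_{(\idc{u}{v}) \in \SSC} \pB{u}{v} = 1$. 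The final expression collapses to $K^{1-2\eta}$, exactly as claimed.

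I do not expect any genuine obstacle here; this is a direct second-moment estimate. The only subtlety, and the reason the bound is so clean, is that exploiting the normalisation $\sum \pB{u}{v} = 1$ sidesteps any dependence on the cardinality $|\SSC|$: a naive application of Chebyshev with the uniform variance bound $\var \leq K/4$ would instead produce a factor $|\SSC|/4$, which, although constant in $K$, is both larger and less transparent. Retaining the probability-weighted variance bound is what delivers the stated inequality with leading constant one.
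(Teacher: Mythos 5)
Your proof is correct and follows essentially the same route as the paper's: a union bound over $\SSC$, Chebyshev's inequality for each $\bino(K,\pB{u}{v})$ count with the variance bound $K\pB{u}{v}(1-\pB{u}{v}) \leq K\pB{u}{v}$, and the normalisation $\sum_{(\idc{u}{v}) \in \SSC} \pB{u}{v} = 1$ (which the paper invokes implicitly when asserting $\sum \var(N_{\idc{u}{v}}) \leq K$). Your explicit remarks on the negative correlation between box counts being harmless and on why the probability-weighted variance bound avoids a factor of $|\SSC|$ are accurate refinements of the same argument, not a different one.
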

\begin{proof} A standard argument using 
the union bound and Chebyshev gives
\begin{eqnarray*}
\pr \left( \max_{(\idc{u}{v}) \in \SSC} \left| N_{\idc{u}{v}} - 
K \pB{u}{v} \right| \geq 
K^{\eta} \right) 
& = & \pr \left( \bigcup_{ (\idc{u}{v}) 
\in \SSC} \left\{ \left| N_{\idc{u}{v}} - K \pB{u}{v} \right| 
\geq K^{\eta} \right\} 
\right)  \\
& \leq & \sum_{ (\idc{u}{v})  \in \SSC} \pr \left( \left| N_{\idc{u}{v}}  
- K \pB{u}{v}  \right| \geq K^{\eta} \right) \\
& \leq & \sum_{ (\idc{u}{v})  \in \SSC} \frac{\var(N_{\idc{u}{v}})}
{K^{2 \eta}} \\
& \leq & K^{1-2 \eta},
\end{eqnarray*}
 since $\var( N_{\idc{u}{v}} ) = K \pB{u}{v}  (1- \pB{u}{v} ) 
\leq K \pB{u}{v} $,
so that  
$\sum_{ (\idc{u}{v})  \in \SSC} \var( N_{\idc{u}{v}} ) \leq K$.
\end{proof}
\subsection{Matching links} \label{sec:matching}
We now complete the proof of Theorem \ref{thm:main} -- recall that we consider 
uniform node distributions $\pr_T$ and $\pr_R$ on a bounded domain $\D$ with
smooth boundary.

\begin{proof}{\bf of Theorem \ref{thm:main}}
The
total sum capacity 
$\csum \leq I_M + J_M$, where $I_M$ is the contribution from matched pairs of links
and $J_M$ is the contribution from unmatched links.
We will consider $M$ growing as a power of $K$, but for now, it is enough to regard $M$ as
fixed.

We pair up the remaining edges in $\SSC \setminus (\SSspine \cup \SSedge)$, working
orthant by orthant.
In particular, 
the matching between $\idc{u}{v}$ and $\idc{u-e}{v+e}$
is one-to-one, and each region is counted at most once. 

For each $\vc{e} \in \{-1,1 \}^D$, we can define the function
$\prj{e}$ by $\prj{e}(\vc{w}) = \vc{w} \cdot \vc{e}$.
The key
observation is that
for each $(\idc{u}{v}) 
\notin \SSspine$, if $\vc{e} = E(\idc{u}{v})$
then the inner product 
$0 \leq \prj{e}(\vc{v} - \vc{u}) \leq \prj{e}((\vc{v} + \vc{e}) - (\vc{u} - \vc{e}))
\prj{e}(\vc{u} - \vc{v}) + 2 D$, so that $(\idc{u-e}{v+e}) \notin \SSspine$.
We can sort the regions $\B{u}{v}$ by value of $\prj{e}(\vc{u} - \vc{v})$,
at each stage adding some $(\idc{u}{v})$ with the lowest value 
of $\prj{e}(\vc{u} - \vc{v})$
that has not yet been matched to the set $\SSbody$. We depict this matching in Figure \ref{fig:spine}.
\begin{center}
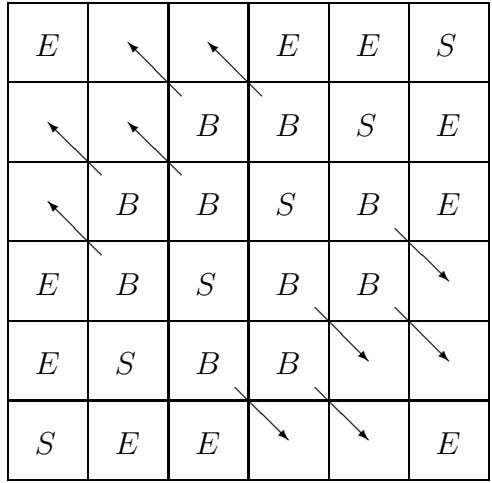
\begin{figure}[!htbp]
\centering
\begin{picture}(185,185)(-2,-2)
\multiput(0,0)(30,0){7}{\line(0,1){180}}
\multiput(0,0)(0,30){7}{\line(1,0){180}}
\multiput(10,10)(30,30){6}{$S$}
\multiput(70,40)(30,30){3}{$B$}
\multiput(100,40)(30,30){2}{$B$}
\multiput(40,70)(30,30){3}{$B$}
\multiput(40,100)(30,30){2}{$B$}
\multiput(85,35)(30,30){3}{\vector(1,-1){20}}
\multiput(115,35)(30,30){2}{\vector(1,-1){20}}
\multiput(35,85)(30,30){3}{\vector(-1,1){20}}
\multiput(35,115)(30,30){2}{\vector(-1,1){20}}
\put(40,10){$E$}
\put(70,10){$E$}
\put(160,10){$E$}
\put(10,40){$E$}
\put(10,70){$E$}
\put(10,160){$E$}
\put(160,100){$E$}
\put(160,130){$E$}
\put(100,160){$E$}
\put(130,160){$E$}
\end{picture}
\caption{Partition of regions into $\SSspine$, $\SSedge$ and $\SSbody$. 
This illustrates the case where $\D = [0,1]$ and $M=6$, and we plot the regions
as squares on $\D \times \D = [0,1]^2$.
We label
regions in $\SSspine$ by $S$, regions in $\SSedge$ by $E$ and regions in $\SSbody$
by $B$, with an arrow into the region they are matched with. \label{fig:spine}}
\end{figure}
\end{center}

This means that we can select a set of regions $\SSbody$ such that for 
$\idc{u}{v} \in \SSbody$, the $\B{u}{v}$ and $\B{u-e}{v+e}$ between them cover
$\SSC \setminus (\SSspine \cup \SSedge)$, that is
$$ \bigcup_{\idc{u}{v} \in \SSbody} \left( \B{u}{v} \cup \B{u-e}{v+e} \right)
= \SSC \setminus (\SSspine \cup \SSedge). $$

For $\D$ with volume $V$ and each $(\idc{u}{v}) \in \SSbody$  the probabilities
$\pB{u}{v} = \pB{u-e}{v+e} = 1/(V^2 M^{2D})$, since the two boxes do not intersect
the boundary of $\D \times \D$.
By assumption
there are at least $K/(V^2 M^{2D}) - K^{\eta}$ links in each of the regions
$\B{u}{v}$ and $\B{u-e}{v+e}$,
since Lemma \ref{lem:full} shows
that the probability that this does not occur is $\leq K^{1-2\eta}$.
We choose $K/(V^2 M^{2D}) - K^{\eta}$
links randomly from each pair of sets and match them, with Lemma \ref{lem:boxing} implying that  
each pair of matched links contributes
at most $\log(1 + 2 f(\euc(\idc{u}{v})))$ to the sum capacity. 
Overall, 
the total contribution to the sum capacity from all the matched links satisfies
\begin{equation} \label{eq:matchedsum}
  I_M \leq \sum_{\idc{u}{v} \in \SSbody} \frac{K}{V^2 M^{2D}} 
\log(1 + 2 f(\euc(\idc{u}{v}))).\end{equation}
By the definition of Riemann integration, by picking $M$
sufficiently large, the term $I_M/K \leq \half \ep \log(1 + 2 \SNR) + \epsilon/2$,
for $\epsilon$ arbitrarily small.

Next we control $J_M$, the contribution from the unmatched links. Specifically, without loss of generality, 
if $\D$ is bounded with volume $V$, 
we can assume $\D \times \D  \subseteq [0,L]^{2D}$ for some $L$.
\begin{enumerate}
\item
 We do not attempt
to match some regions because
they belong in $\SSedge$ or $\SSspine$. 
\begin{enumerate}
\item We assume that the boundary of $\D$ is sufficiently smooth that there exists a finite
$A$ (`surface area') such that
$|\SSedge| \leq A M^{2D-1}$  for all $M$. (For example if $\D = [0,1]^D$ then
$|\SSedge| \leq 4D M^{2D-1}$, since there are $2D$ co-ordinates that can take values $0$ or
$M-1$,
and then $M^{2D-1}$ values for the remaining co-ordinates).
\item
The number of regions in $|\SSspine| \leq D (LM)^{2D-1}$, since there
are $D$ co-ordinates which can agree, and at most $LM$ possible values the remaining
co-ordinates can take. 
\end{enumerate}
Overall, there are at most $(A + D L^{2D-1}) M^{2D-1}$ regions we do not attempt to match.
Each region we do not attempt to match contains
at most $K/(V^2 M^{2D}) + K^{\eta}$ links.
\item
We attempt to perform matching between at most
$(LM)^{2D}$ regions, with at most $2 K^{\eta}$ unmatched links
remaining from each.
\end{enumerate}
In total we deduce 
there are $(A + D L^{2D-1})(K/(V^2 M) + K^{\eta} M^{2D-1})
+ 2 (LM)^{2D} K^{\eta}$ unmatched links. Writing $\beta =
1/(3(2D+1))$, and
choosing $M = K^{3 \beta(1-\eta)}$, and for example taking $\eta = 2/3$, there are 
$O \left(K^ {1 - \beta} \right)$ unmatched links.

The single user capacity bound (see for example  \cite[Equation (6.4)]{tse})
tells us that reliable rates satisfy:
\begin{equation} \label{eq:single}
\R{i} \leq \log( 1 + \SNR_i) \leq \log(1 + 2 \SNR_i) = 2 S_{ii} \leq 2 \max_i
S_{ii},\end{equation}
so that $J_M = c_1 K^{1-\beta} \max_{1 \leq i \leq K} S_{ii}$ for some $c_1$.
Hence overall, the probability
\begin{eqnarray*}
\pr( J_M/K \geq \epsilon/2)
& \leq & \pr \left(  \max_{1 \leq i \leq K} S_{ii} \geq  \frac{\epsilon K^{\beta}}{2 c_1} \right),
\end{eqnarray*}
which tends
to zero by  Equation (\ref{eq:extreme}).
\end{proof}

Note that the upper bound in Equation (\ref{eq:ozgur}) obtained by {\"O}zg{\"u}r, L{\'e}v{\^e}que, and 
Tse \cite{ozgur} essentially has the extra factor of $\log K$ since the bound is only made up
on the term $J_M$. It is precisely the matching argument that gives rise to the $I_M$ term
which has reduced the order of the bound, as we take advantage of the extra randomness provided
by placing transmitter and receiver nodes separately.

Note that Equations (\ref{eq:extreme}) and (\ref{eq:single}) together give probabilistic
bounds on $\max_{1 \leq i \leq K} \R{i}$. Specifically, since they prove that
$\max_{1 \leq i \leq K} \R{i} = O_{\pr}(\log K)$, they control the extent to which a
large sum capacity can be achieved by a small number of links that operate at particularly
high capacity. This suggests that in this case the sum capacity is not too unfair a
measure of network performance.
\section{Future work and extensions} \label{sec:future}
We briefly comment on some extensions of Theorem \ref{thm:main} to more
general models.

\subsection{Non-uniform node distributions} \label{sec:non-unif}

We would like to consider more general distributions $\pr_R$ and $\pr_T$, rather than
simply assuming that these distributions are uniform. Indeed, we might wish to extend
to a situation where $\pr_{R,T}$ have a joint distribution from which we sample independently
 to find receiver and transmitter distributions.

The main issue that arises is whether we can quantize the joint distribution into regions
$\B{u}{v}$ which can be uniquely paired off with other regions $\B{u'}{v'}$, with the paired
regions having equal probability, and with bounds on the sum of reliable rates being
possible according to results such as Lemma \ref{lem:boxing}.

One case that certainly works is that where the co-ordinates of $\pr_R$ and $\pr_T$ are independent. 
In this case we can extend Equation (\ref{eq:boxdef}) to obtain
\begin{eqnarray*} 
\B{u}{v} & = & \left\{ (\idc{x}{y}):   
q_{\RR,l}(u/M) \leq x^{(l)} < q_{\RR,l}((u+1)/M), \right. \\
& & \left. \;\;\;\;\;\;\;\;\;\;\;\;\; q_{\TT,l}(v/M) \leq y^{(l)} < q_{\TT,l}((v+1)/M),
 \mbox{ \; for all $l$} \right\},
\end{eqnarray*} 
where $q_{\RR,l}(x)$ is the $x$th quantile of the distribution of the $l$th component of the
receivers, and
$q_{\TT,l}(x)$ is the $x$th quantile of the distribution of the $l$th component of the
transmitters.

The proof of Theorem \ref{thm:main} follows exactly as before in this case.

\subsection{Random fading model} \label{sec:randfad}
We briefly remark on an adaption of Definition \ref{def:transprot} that would
have the same independence structure, while introducing random fading into the
model. That is, we could set
   $$   \INR_{ij} = M_{ij} f(\euc(t_i,r_j)), $$
where the $M_{ij}$ are IID random variables with a density. Under our node placement
model, the $\SNR_i$ will again be IID, making this model  tractable in much the same
way.
In particular we can extend the tail behaviour bounds given in Lemma \ref{lem:tailbd}
to this case, adjusting the constant to take account of the random fading term.
\begin{lemma} \label{lem:tailbd2}
Consider an  IID network with spatially separated $\pr_T$ and 
$\pr_R$, with signals decaying at rate $\alpha$. If the fading random variables 
$M_{ij}$ have finite mean then
the probability that the maximum of the $K$ 
random variables $S_{ii}$ is large tends to zero:
$$
 \lim_{K \tends} \pr \left( \max_{1 \leq i \leq K} S_{ii} \geq \max
\left( \frac{2 \alpha}{\dsep},1 \right) \log K
\right) = 0. $$
\end{lemma}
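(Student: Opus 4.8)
The plan is to mimic the proof of Lemma~\ref{lem:tailbd}, which established the deterministic-fading tail bound, and adapt it to absorb the extra random factor $M_{ij}$. The key structural fact is that by the node-placement model the variables $S_{ii}$ are still IID, so it suffices to obtain a good tail bound on a single $S_{ii}$ and then apply the union bound exactly as before. First I would unpack the event $\{S_{ii} \geq u\}$ through the definition $S_{ii} = \half\log(1+2\SNR_i)$ with $\SNR_i = M_{ii}\,f(\euc(T_i,R_i))$, reducing it to a statement about the product $M_{ii}\,f(\euc(T_i,R_i))$ being large, in the same manner as the chain of inequalities culminating in Equation~(\ref{eq:exptail}).

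The main new ingredient is that we can no longer directly read off a distance bound, because a large value of $\SNR_i$ could arise either from a small distance $\euc(T_i,R_i)$ (controlled by spatial separation) or from a large fading value $M_{ii}$ (controlled only by the finite-mean assumption). The natural device is to split the event according to which factor is responsible. Concretely, for a threshold $\theta>0$ I would write
$$
\pr(M_{ii}\,f(\euc(T_i,R_i)) \geq t)
\leq \pr(M_{ii} \geq \theta t) + \pr\bigl(f(\euc(T_i,R_i)) \geq t/\theta\bigr),
$$
handling the second term exactly as in Lemma~\ref{lem:tailbd} (combining the decay bound~(\ref{eq:decay}) with spatial separation~(\ref{eq:spatsep})), and handling the first term by Markov's inequality, which is available since $M_{ij}$ has finite mean. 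One may take $\theta$ a fixed constant or, more cleanly, let $\theta$ scale with $t$ to balance the two contributions. This yields a tail bound of the form $\pr(S_{ii}\geq u)\leq C\exp(-c u)$ for suitable constants, with the decay rate degraded relative to the deterministic case because the Markov-derived term decays only polynomially in $t$, i.e.\ linearly after the logarithmic change of variables.

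With the single-variable tail in hand, I would finish by the union bound precisely as in Lemma~\ref{lem:tailbd}: setting $u = \max(2\alpha/\dsep,1)\log K$, the target probability is at most $K\,\pr(S_{ii}\geq u)$, and the stated threshold is chosen exactly so that $K$ times the tail bound tends to zero. The factor $2\alpha/\dsep$ rather than $\alpha/\dsep$ (compare the threshold in Equation~(\ref{eq:extreme})) is what compensates for the slower decay caused by the Markov estimate on the fading term, ensuring the product $K\,\pr(S_{ii}\geq u)\to 0$. The main obstacle is the first step: choosing the split threshold $\theta$ and verifying that \emph{both} resulting tail contributions decay fast enough, after the union bound, given only a finite-mean (rather than, say, exponential-moment) hypothesis on $M_{ij}$ — this is where the constant $2\alpha/\dsep$ and the need for the $\max(\cdot,1)$ truncation genuinely arise, and it is the only place where the argument departs from a routine transcription of Lemma~\ref{lem:tailbd}.
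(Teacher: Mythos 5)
Your plan coincides with the paper's own proof in every step but one: the paper likewise uses that the $S_{ii}$ remain IID, splits the product tail $\{M_{ii}f(\euc(T_i,R_i)) \geq t\}$ into a fading event and a distance event, bounds the former by Markov's inequality (this is exactly where finite mean of $M_{ij}$ enters) and the latter by the spatial-separation-plus-decay computation of Lemma \ref{lem:tailbd}, and finishes with the union bound. The step you get wrong is the split itself. As displayed,
$$ \pr\left(M_{ii}\,f(\euc(T_i,R_i)) \geq t\right) \leq \pr(M_{ii} \geq \theta t) + \pr\left(f(\euc(T_i,R_i)) \geq t/\theta\right) $$
is false in general: if both events on the right fail, you only learn that the product is $< (\theta t)\cdot(t/\theta) = t^2$, which does not contradict $M_{ii}f \geq t$ once $t > 1$ --- and here $t = \exp(2u)/3$ is large. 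For a union $\{M \geq a\} \cup \{f \geq b\}$ to contain $\{Mf \geq t\}$ you need $ab \leq t$, and no choice of $\theta$, fixed or $t$-dependent, repairs your version, since your two thresholds always multiply to $t^2$. The correct balanced choice is $a = b = \sqrt{t}$, which is what the paper uses: with $t = \exp(2u)/3$,
$$ \pr(S_{ii} \geq u) \leq \pr\left(M_{ii} \geq \frac{\exp(u)}{\sqrt{3}}\right) + \pr\left(f(\euc(T_i,R_i)) \geq \frac{\exp(u)}{\sqrt{3}}\right) \leq \sqrt{3}\, \ep M_{ii} \exp(-u) + \csep (\cdec\sqrt{3})^{\dsep/\alpha} \exp(-u\dsep/\alpha), $$
after which $K\,\pr(S_{ii} \geq u) \to 0$ at $u = \max(2\alpha/\dsep,1)\log K$, exactly as you describe.

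A smaller correction of attribution: the doubling from $\alpha/\dsep$ (in Equation (\ref{eq:extreme})) to $2\alpha/\dsep$ is not compensation for the Markov estimate. It arises because the balanced split hands the distance factor only $\exp(u)$ of the available $\exp(2u)$, so the distance tail decays like $\exp(-u\dsep/\alpha)$ rather than the $\exp(-2u\dsep/\alpha)$ of Equation (\ref{eq:exptail}), forcing the threshold up by a factor of $2$. It is the $\max(\cdot,1)$ that accommodates the Markov term, whose bound $\sqrt{3}\,\ep M_{ii}\exp(-u)$ needs $u$ at least of order $\log K$ to survive multiplication by $K$ in the union bound.
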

\begin{proof}
An equivalent of Equation (\ref{eq:exptail}) holds, since as before, for any $u \geq 1$,
\begin{eqnarray*}
\pr( S_{ii} \geq u) 
& \leq & \pr( M_{ii} f(\euc(T_i, R_i)) \geq \exp(2u)/3) \\
& \leq & \pr \left( \{ M_{ii} \geq \exp(u)/\sqrt{3} \} \cup
\{ f(\euc(T_i, R_i)) \geq \exp(u)/\sqrt{3} \} \right) \\
& \leq & \pr( M_{ii} \geq \exp(u)/\sqrt{3} )  +
\pr( \cdec/\euc(T_i, R_i)^{\alpha} \geq \exp(u)/\sqrt{3}) \\
& = & \sqrt{3} \ep M_{ii} \exp(-u) +  
\pr( \euc(T_i, R_i) \leq (\cdec \sqrt{3})^{1/\alpha} \exp(-u/\alpha)) \\
& \leq & \sqrt{3} \ep M_{ii} \exp(-u) + \csep (\cdec \sqrt{3})^{\dsep/\alpha} \exp(-u 
\dsep/\alpha)).
\end{eqnarray*}
so the result follows exactly as before, using the union bound.
\end{proof}
The key to proving convergence in probability of $\csum/K$ is to show that
matching is possible between elements of $\B{u}{v}$ and $\B{u-e}{v+e}$, as
before. In the case of deterministic fading, any links $(R_i,T_i) \in \B{u}{v}$
and $(R_j,T_j) \in \B{u-e}{v+e}$ could be matched, since the proof of 
Lemma \ref{lem:boxing} showed that in this case
\begin{equation} \label{eq:orderdist}
\euc(T_j, R_j) \geq \euc(T_j, R_i) \geq \euc( T_i, R_i).
\end{equation}
In the case of random fading, this is not enough to control the relevant values of
$\INR$. However, Lemma \ref{lem:match2} shows that we
can match a high proportion of links, by looking for $(i,j)$
such that
\begin{equation} \label{eq:ordern}
M_{jj} \leq M_{ji} \leq M_{ii},
\end{equation}
which can be combined with Equation (\ref{eq:orderdist}) to deduce that $\SNR_j
\leq \INR_{ji} \leq \SNR_i$, so that again the sum capacity of the relevant two user
channel $\leq \log(1 + 2 \SNR_i)$. Note that it is enough for our purposes to consider the 
case of uniform $M_{ij}$ with densities, since only the ordering between random
variables matters in Equation (\ref{eq:ordern}).  We give a technical lemma that will imply the control that we require.
\begin{lemma} \label{lem:match2}
Consider a bipartite graph with $n$ vertices in each part
which we refer to as $\AAA = (A_1, \ldots, A_n)$ and $\BBB = (B_1, \ldots, B_n)$ 
respectively. All the vertices are labelled with independent $U[0,1]$ random 
variables, with
$A_i$ labelled by $U_i$ and $B_j$ labelled by $V_j$. The bipartite
graph has an edge from
$A_i$ to $B_j$ iff $U_i \leq W_{ij} \leq V_j$, where $W_{ij}$ are $U[0,1]$, 
independent of $(\vc{U},\vc{V})$ and each other.

For any $\gamma \geq 2/3$,
there exists a matching of all but $O(n^{\gamma})$  vertices, with probability
$\geq 1 - 5 \exp( -2n^{2 \gamma -1})$.
\end{lemma}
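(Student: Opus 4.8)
The plan is to bound the number of unmatched vertices through the defect (K\"onig--Hall) form of the matching theorem already used at Equation~(\ref{eq:walkup}): for a balanced bipartite graph on $n+n$ vertices the maximum matching has size $n-\max_{S\subseteq\AAA}\bigl(|S|-|N(S)|\bigr)$, so a matching leaving at most $d$ vertices unmatched exists exactly when every $S\subseteq\AAA$ satisfies $|N(S)|\ge|S|-d$. I would condition throughout on the labels $\vc{U}=(U_i)$ and $\vc{V}=(V_j)$; given these, the edges are independent with $\pr(A_i\sim B_j\mid\vc{U},\vc{V})=\max\{V_j-U_i,0\}$. It then suffices to show that the maximal deficiency $\max_S(|S|-|N(S)|)$ is $O(n^{\gamma})$ off an event of probability $5\exp(-2n^{2\gamma-1})$; since the graph is balanced, the resulting matching automatically leaves equally few vertices of $\BBB$ unmatched.

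First I would discard the \emph{extreme} vertices, namely those $A_i$ with $U_i>1-\tau$ and those $B_j$ with $V_j<\tau$ for $\tau=n^{\gamma-1}$. These have negligible expected degree and cannot be reliably matched, but there are only $O(n\tau)=O(n^{\gamma})$ of them: each count is $\bino(n,\tau)$ with mean $n^{\gamma}$, so Hoeffding's inequality pins it within $n^{\gamma}$ of its mean with failure probability $2\exp(-2n^{2\gamma-1})$. The point of the trimming is that every \emph{retained} vertex is well connected: a retained $A_i$ with $U_i=u\le 1-\tau$ has expected degree $\sum_j\max\{V_j-u,0\}\approx \half n(1-u)^2\ge \half n\tau^2=\half n^{2\gamma-1}\to\infty$.

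The heart of the argument is bounding the deficiency on the retained graph, and this is where the random coins $W_{ij}$ are the main obstacle. If one kept only the geometric constraint $U_i\le V_j$, the neighbourhoods would be nested, $N(A_i)=\{j:V_j\ge U_i\}$, so the deficiency-maximising sets would be exactly the upper threshold sets $S_x=\{i:U_i>x\}$, with $|S_x|-|N(S_x)|=\#\{U_i>x\}-\#\{V_j\ge x\}$; this is a mean-zero empirical-process imbalance whose supremum over $x$ is $O(\sqrt{n})=O(n^{\gamma})$ by a Dvoretzky--Kiefer--Wolfowitz (or Hoeffding) bound, again with failure probability of the required form. The coins destroy this exact nesting, and the difficulty is to control $N(S_x)$ uniformly in $x$. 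I would do this by showing that each retained $B_j$ with $V_j$ not too close to $x$ is reached from $S_x$ with overwhelming probability: the chance that $B_j$ is \emph{missed} equals $\prod_{i\in S_x}\bigl(1-\max\{V_j-U_i,0\}\bigr)$, which is super-polynomially small once $S_x$ contains order $n$ vertices at distance $\ge\tau$ below $V_j$. A union bound over the $O(n)$ distinct thresholds and the $n$ receivers then yields $N(S_x)\supseteq\{j:V_j>x+\tau\}$ simultaneously, so the coins cost at most the thin band $\#\{x<V_j\le x+\tau\}=O(n\tau)=O(n^{\gamma})$; for the few thresholds $x>1-\tau$ where $S_x$ is too small for this, the deficiency is trivially at most $|S_x|=O(n\tau)=O(n^{\gamma})$.

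Assembling the three contributions -- the trimmed vertices, the geometric imbalance, and the coin-induced band -- bounds the deficiency by $O(n^{\gamma})$, and the hypothesis $\gamma\ge 2/3$ is precisely what keeps all of these error terms within the $O(n^{\gamma})$ budget while their failure probabilities combine by a union bound to the stated $5\exp(-2n^{2\gamma-1})$. The step I expect to be genuinely delicate is the uniform coin-reachability of $N(S_x)$: unlike the purely geometric case, the worst deficient set is no longer literally a threshold set, so some care is needed either to justify the reduction to threshold sets or to absorb the resulting discrepancy into the thin-band term.
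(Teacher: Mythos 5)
Your overall architecture (trim the extreme labels at scale $\tau = n^{\gamma-1}$, invoke Massart/DKW-type concentration at that scale, pay an $O(n^{\gamma})$ band to handle the coins) closely matches the paper's, but the step you yourself flag as ``genuinely delicate'' is a genuine gap, and it is exactly where the paper's proof does its real work. The defect K\"onig--Hall condition quantifies over \emph{all} $2^n$ subsets $S \subseteq \AAA$, and with independent coins $W_{ij}$ the neighbourhoods $N(A_i)$ are not nested in $U_i$: there is no deterministic (or couplable) reduction of $\max_S\bigl(|S|-|N(S)|\bigr)$ to the $O(n)$ threshold sets $S_x$. A set $S$ made of $k$ atypically unlucky low-$U$ vertices can have a far smaller neighbourhood than the threshold set of the same size; each such conspiracy is individually very unlikely, but there are exponentially many candidate sets, and your polynomial union bound over thresholds and receivers never inspects them. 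The paper handles this by bounding blocking pairs $(\AAA_S,\BBB_S)$ of every size $k$ via Equation~(\ref{eq:walkup}): after discarding the $3n^{\gamma}$ extreme vertices on each side, the event (\ref{eq:dkw}) forces every size-$k$ subset of $\overline{\AAA}$ to contain at least $n^{\gamma}$ vertices with $U_i \le 1-k/n-n^{\gamma-1}$ and every size-$(N-k+1)$ subset of $\overline{\BBB}$ to contain at least $n^{\gamma}$ vertices with $V_j \ge 1-k/n+n^{\gamma-1}$; the resulting $n^{2\gamma}$ cross pairs each carry an independent edge with probability at least $2n^{\gamma-1}$, so a blocking pair has probability at most $\exp(-2n^{3\gamma-1})$, which beats the count $\binom{N}{k}\binom{N}{N-k+1} \le 2^{2N}$ precisely because $3\gamma-1 \ge 1$.

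A telling symptom of the gap: nowhere in your sketch does the hypothesis $\gamma \ge 2/3$ actually bind --- your trimming, DKW and thin-band failure probabilities would all go through for any $\gamma > 1/2$ --- whereas in the paper it enters exactly at the exponential union bound just described, via $3\gamma - 1 \ge 1$. There is also a smaller quantitative slip: your claim that a receiver with $V_j$ just above $x+\tau$ is missed by $S_x$ with super-polynomially small probability fails at the band edge, where the miss probability is $\exp\bigl(-\Theta(n\tau^2)\bigr) = \exp\bigl(-\Theta(n^{2\gamma-1})\bigr)$, since $\sum_{i \in S_x}(V_j - U_i)^+ \approx n(V_j-x)^2/2$; that is curable by widening the band to $Cn^{\gamma-1}$ and adjusting constants, but no amount of band-widening fixes the main problem, because non-threshold sets are simply never examined. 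To repair the proof you would replace the threshold-set reduction by the paper's blocking-pair union bound (or an equivalent expansion estimate over all subset sizes), at which point your argument essentially becomes the paper's.
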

\begin{proof} 
We 
throw away the
$3 n^{\gamma}$ vertices with the biggest values of $U_i$ and the $3 n^\gamma$ vertices
with the lowest values of $V_j$.
This leaves  new sets $\overline{\AAA}$ and $\overline{\BBB}$ each
of size $N = N(n) = n - 3 n^\gamma$. We will show that there exists a matching between 
$\overline{\AAA}$ and $\overline{\BBB}$ with high probability, again using 
Equation (\ref{eq:walkup}) and controlling the probability of blocking pairs.
We condition on the event  
\begin{equation} \label{eq:dkw}
 \left\{ \sup_t \left| \frac{\# \{ i: U_i \leq t \}}{n} - t \right| \leq n^{\gamma-1} \right\}
\bigcup \left\{ \sup_t \left| \frac{\# \{ i: V_i \leq t \}}{n} - t \right| \leq n^{\gamma-1}
\right\},
\end{equation}
since Massart's form of the Dvoretzky--Kiefer--Wolfowitz theorem \cite{massart}
 tells us that this does not take place with probability $\leq 
4 \exp( - 2n^{2\gamma-1})$.

Conditional on the event (\ref{eq:dkw}), for any $k$
we know there are more than $n-k-2n^\gamma$ values of $U_i$
which are less than $1-k/n - n^{\gamma-1}$ 
. Equivalently,
there are fewer than $k + 2 n^{\gamma}$ values of $U_i$ larger than $1-k/n - n^{\gamma-1}$.
Since we throw away the largest $3 n^{\gamma}$ values of $\AAA$, 
any subset of $\AAA_S \subset \overline{\AAA}$ of size $k$ has at least
$n^{\gamma}$ vertices with $U_i$ values less than $1-k/n - n^{\gamma-1}$ (`small vertices').

By a similar argument, any subset $\BBB_S \subset \overline{\BBB}$ of size $N-k+1$ 
has at least $n^{\gamma}$ vertices with $V_j$ values greater than $1-k/n + n^{\gamma +1}$
(`large vertices'). See Figure \ref{fig:dkw} for a depiction of these events.

\begin{center}
\begin{figure}[!htbp]
\centering
\begin{picture}(400,130)
\put(8,28){$U$}
\put(8,98){$V$}
\put(20,30){\line(1,0){350}}
\put(20,100){\line(1,0){350}}
\multiput(325,20)(1,0){46}{\line(0,1){20}}
\multiput(20,90)(1,0){46}{\line(0,1){20}}
\put(200,10){\line(0,1){100}}
\put(190,0){$1-k/n$}
\put(185,10){\line(0,1){30}}
\put(75,5){$1-k/n - n^{\gamma-1}$}
\put(160,10){\vector(1,0){20}}
\put(215,90){\line(0,1){30}}
\put(245,115){$1-k/n + n^{\gamma-1}$}
\put(240,115){\vector(-1,0){20}}
\multiput(175,28)(5,0){30}{x}
\multiput(65,97)(5,0){32}{x}
\end{picture}
\caption{Position of vertices in subsets $\AAA_S$ and $\BBB_S$ forced by
throwing away $3n^{\gamma}$ 
largest values of $U$ and $3n^{\gamma}$ smallest values of $V$. \label{fig:dkw}}
\end{figure}
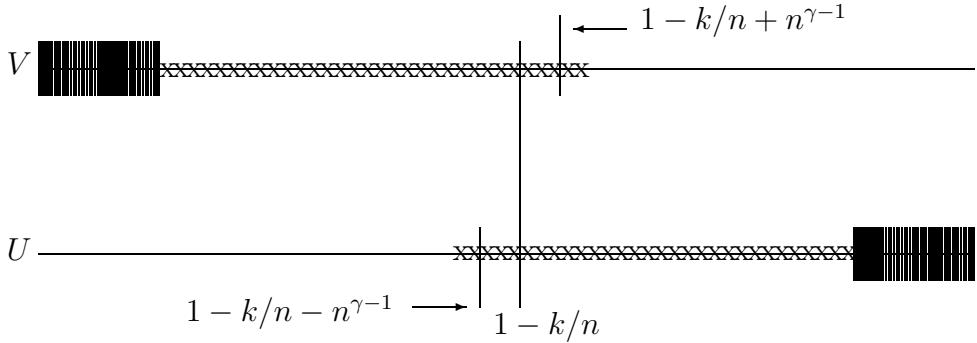
\end{center}

There is an edge between each of these small vertices in $\AAA_S$ and large vertices
in $\BBB_S$ independently with probability at least $2 n^{\gamma-1}$.
Hence, the probability that a particular $\AAA_S$ and $\BBB_S$ form a blocking pair is less
than 
$(1 - 2 n^{\gamma-1})^{ n^{2\gamma}} \leq \exp( - 2 n^{3 \gamma - 1}).$
Substituting in Equation (\ref{eq:walkup}) the probability of no matching
is 
$$ \leq \sum_{k=1}^N \binom{N}{k} \binom{N}{N-k+1} \exp( - 2 n^{3 \gamma - 1}) =
\binom{2N}{N+1} \exp(-2 n^{3 \gamma-1}) 
\leq 2^n \exp(-2 n^{3 \gamma-1}),
$$
and the result follows since $3 \gamma - 1 \geq 1$, combining with the probability of
(\ref{eq:dkw}) failing to occur.
\end{proof}
The remainder of the proof of Theorem \ref{thm:main} carries over as before. We
need only alter Section \ref{sec:matching}, and this can be done  the 
increase in numbers of unmatched vertices remains sublinear in $K$.
%
\subsection{Constructive algorithm} \label{sec:algorithm}
Although Theorem \ref{thm:main} only gives a result concerning average performance of large networks,
it does suggest some techniques that can be used to approximate the sum capacity of any
particular 
Gaussian interference network. If the network is created via a spatial model, then we can
attempt to match cross-links into $\epsilon$-bottleneck channels using the constraints on
spatial position described in the proof of Theorem \ref{thm:main}, deducing bounds as a
result.

However, even if we are only presented with the values of $\SNR_i$ and $\INR_{ij}$, it may
be possible to find bounds on sum capacity using the insights given by Lemma \ref{lem:twouser}.
Using the interference alignment scheme of \cite{nazer}, we know that a lower bound on 
$\csum$ is given by $\sum_i \half \log(1 + 2\SNR_i)$.

One possible algorithm to find an upper bound works as follows:
\begin{enumerate}
\item Sort the indices by value of $\SNR_i$, and for some $M$, partition the transmitter--
receiver links into $2M$ categories $B_{r}$ of approximately equal size $K/(2M)$.
\item For each $M$, we attempt to match links between $B_{2M-1}$ and $B_{2M}$, considering the
bipartite graph between them. 
\begin{enumerate}
\item We add an edge to the bipartite graph between $j \in B_{2M-1}$
and $i \in B_{2M}$ if $\SNR_j \leq \INR_{ji} \leq \SNR_i$.
\item We look for a maximal matching on the bipartite graph, using (for example) the
Hopcroft--Karp algorithm \cite{hopcroft}, 
which has complexity $\sqrt{V} E$, where $V$ is the number of
vertices and $E$ the number of edges.
\end{enumerate}
\item By Lemma \ref{lem:twouser}
each edge $(j,i)$ in each maximal matching contributes $\log(1 + \INR_{ji} + \SNR_i)$ as an
upper bound on the sum capacity, and each unmatched vertex $i$ simply contributes the single
user upper bound of $\log(1+\SNR_i)$
\end{enumerate}

By varying the size of $M$, this algorithm will find a range of
upper bounds, of which we can choose the tightest. We want $M$ large enough that
categories $B_r$ each contain a narrow range of $\SNR$ values, but $M$ small enough
that there are plenty of points in each range $B_r$ to ensure a large maximal matching.
\section{Conclusions}
In this paper, we have 
deduced sharp bounds for the sum capacity
of a Gaussian interference  network.
Our main contribution comes
through the upper bound, which uses arguments based on controlling 
the position of pairs of vertices to
match them
into bottleneck states. Although our main result is proved under
the assumption of deterministic fading, with signal strength decaying as a function of distance,
in Section \ref{sec:randfad} we describe an extension to a model
with random fading.

\section*{Acknowledgements}
M.~Aldridge and R.~Piechocki would like to thank Toshiba Telecommunications Research
Laboratory and its directors for supporting this work.
The authors would like to thank Justin Coon and Magnus Sandell of Toshiba for
their advice and support with this research.


\begin{thebibliography}{10}

\bibitem{bresler}
G.~Bresler, A.~Parekh, and D.~N.~C. Tse.
\newblock The approximate capacity of the many-to-one and one-to-many {
  Gaussian} interference channels.
\newblock 2008.
\newblock {\tt arXiv:0809.3554v1}.

\bibitem{cadambe}
V.~R. Cadambe and S.~A. Jafar.
\newblock Interference alignment and degrees of freedom of the {$K$}-user
  interference channel.
\newblock {\em IEEE Trans. Inform. Theory}, 54(8):3425--3441, 2008.

\bibitem{cadambe2}
V.~R. Cadambe and S.~A. Jafar.
\newblock Degrees of freedom of wireless networks with relays, feedback,
  cooperation, and full duplex operation.
\newblock {\em IEEE Trans. Inform. Theory}, 55(5):2334--2344, 2009.

\bibitem{erdos}
P.~Erd{\H{o}}s and A.~R{\'e}nyi.
\newblock On random matrices.
\newblock {\em Magyar Tud. Akad. Mat. Kutat\'o Int. K\"ozl}, 8:455--461, 1964.

\bibitem{etkin}
R.~H. Etkin, D.~N.~C. Tse, and H.~Wang.
\newblock Gaussian interference channel capacity to within one bit.
\newblock {\em IEEE Trans. Inform Theory}, 54(12):5534--5562, 2008.

\bibitem{gupta}
P.~Gupta and P.~R. Kumar.
\newblock The capacity of wireless networks.
\newblock {\em IEEE Trans. Inform. Theory}, 46(2):388--404, 2000.

\bibitem{han2}
T.~S. Han and K.~Kobayashi.
\newblock A new achievable rate region for the interference channel.
\newblock {\em IEEE Trans. Inform. Theory}, 27(1):49--60, 1981.

\bibitem{hopcroft}
J.~E. Hopcroft and R.~M. Karp.
\newblock An $n^{5/2} $ algorithm for maximum matchings in bipartite graphs.
\newblock {\em SIAM Journal on Computing}, 2(4):225--231, 1973.

\bibitem{jafar}
S.~A. Jafar.
\newblock The ergodic capacity of interference networks, 2009.
\newblock {\tt arXiv:0902.0838v1}.

\bibitem{massart}
P.~Massart.
\newblock The tight constant in the {D}voretzky-{K}iefer-{W}olfowitz
  inequality.
\newblock {\em Ann. Probab.}, 18(3):1269--1283, 1990.

\bibitem{nazer}
B.~Nazer, M.~Gastpar, S.~A. Jafar, and S.~Vishwanath.
\newblock Ergodic interference alignment.
\newblock In {\em Proceedings of the IEEE International Symposium on
  Information Theory (ISIT 2009), Seoul, Korea, June 2009 (To appear). {\tt
  arXiv:0901.4379v1}}, 2009.

\bibitem{ozgur2}
A.~{\"O}zg{\"u}r and O.~L{\'e}v{\^e}que.
\newblock Throughput-delay trade-off for hierarchical cooperation in ad hoc
  wireless networks.
\newblock 2008.
\newblock {\tt arXiv:0802.2013v1}.

\bibitem{ozgur}
A.~{\"O}zg{\"u}r, O.~L{\'e}v{\^e}que, and D.~N.~C. Tse.
\newblock Hierarchical cooperation achieves optimal capacity scaling in ad hoc
  networks.
\newblock {\em IEEE Trans. Inform. Theory}, 53(10):3549--3572, 2007.

\bibitem{shannon}
C.~E. Shannon and W.~W. Weaver.
\newblock {\em A Mathematical Theory of Communication}.
\newblock University of Illinois Press, Urbana, IL, 1949.

\bibitem{tse}
D.~N.~C. Tse and P.~Viswanath.
\newblock {\em Fundamentals of Wireless Communication}.
\newblock Cambridge University Press, 2005.

\bibitem{walkup2}
D.~W. Walkup.
\newblock Matchings in random regular bipartite digraphs.
\newblock {\em Discrete Math.}, 31(1):59--64, 1980.

\bibitem{xue}
F.~Xue and P.~R. Kumar.
\newblock Scaling laws for ad hoc wireless networks: An information theoretic
  approach.
\newblock {\em Foundations and Trends in Networking}, 1(2):145--170, 2006.

\end{thebibliography}
\end{document}